\documentclass[journal,12pt,onecolumn,draftclsnofoot]{IEEEtran}
\usepackage{cite}  
\bibliographystyle{ieeetr}
\usepackage{color}
\usepackage{algorithm} 
\usepackage{algorithmic} 
\usepackage{multirow} 
\usepackage{amsmath} 
\usepackage{xcolor}

\usepackage{graphicx}
\usepackage{epstopdf}

\usepackage{amsfonts}
\usepackage{stfloats}
\usepackage{amsthm,amsmath,amssymb}
\usepackage{mathrsfs}
\usepackage{hyperref}
\usepackage{bm}
\usepackage{amsmath}

\newtheorem{lemma}{Lemma}
\newtheorem{proposition}{Proposition}

\ifCLASSINFOpdf

\else

\fi

\hyphenation{op-tical net-works semi-conduc-tor semi-definite}

\begin{document}
\title{Multiple Intelligent Reflecting Surfaces Assisted Cell-Free MIMO Communications}

\author{Xie Xie,~\IEEEmembership{Student Member,~IEEE,}
        Chen He,~\IEEEmembership{Member,~IEEE,}
        Xiaoya Li,
        Kun Yang,~\IEEEmembership{Senior Member,~IEEE,}
        and Z. Jane Wang,~\IEEEmembership{Fellow,~IEEE}
\thanks{Xie Xie, Chen He, and Xiaoya Li are with the School of Information Science and Technology, Northwest University, Xi'an, 710069, China. Kun Yang is with the School of Information Science and Technology, Northwest University, Xi'an, 710069, China, and he is also with School of Computer Science \& Electronic Engineering, University of Essex, Wivenhoe Park, Colchester, Essex CO4 3SQ, United Kingdom. Z. Jane Wang is with Department of Electrical and Computer Engineering, The University of British Columbia, Vancouver, BC V6T1Z4, Canada. Corresponding author: Chen He (email: chenhe@nwu.edu.cn).}
}

\markboth{Submitted to IEEE Trans on Wireless Communications}
{Shell \MakeLowercase{\textit{et al.}}: Bare Demo of IEEEtran.cls for IEEE Journals}

\maketitle

\begin{abstract}
\textcolor{blue}{
In this paper, we investigate an intelligent reflecting surface (IRS) assisted cell-free multiple input multiple output (MIMO) communication system, where distributed multiple IRSs are dedicated deployed to assist distributed multiple base stations (BSs) for cooperative transmission. Our objective is to maximize the achievable sum-rate of the cell-free system by jointly optimizing the active transmit beamforming matrices at BSs and the passive reflecting beamforming matrices at IRSs, while the constraints on the maximum transmit power of each BS and the phase shift of each IRS element are satisfied. We propose an efficient framework to jointly design the BSs, the IRSs, and the user equipment (UEs). 
As a compromise approach, we first transform the non-convex problem into an equivalent form based on the fractional programming methods and then decompose the reformulated problem into two subproblems and solve them alternately. Particularly, we propose a Lagrangian dual sub-gradient based algorithm to solve the subproblem of optimizing the active transmit beamforming with nearly closed-form solutions. We reformulate the subproblem of optimizing the passive reflecting beamforming as a constant modulus constrained quadratic programming (CMC-QP) problem. We first provide two feasible solutions by proposing a pair of relaxation-based algorithms. We also develop a low-complexity alternating sequential optimization (ASO) algorithm to obtain closed-form solutions. All three algorithms are guaranteed to converge to locally optimal solutions. 
Simulation results demonstrate that the proposed algorithms achieve considerable performance improvements compared with the benchmark schemes.}
\end{abstract}

\begin{IEEEkeywords}
Intelligent Reflecting Surface, Cell-Free, Constant Modulus Constraint, Quadratic Programming.
\end{IEEEkeywords}

\IEEEpeerreviewmaketitle

\section{Introduction}
Densely deployed base stations (BSs) have been proposed to support massive wireless devices and satisfy high rate expectations.
The multiple cellular (multi-cell) technique is a conventional method to cover a large area with multiple user equipment (UEs) \cite{7917284,7827017,9354156,5594708}.
\textcolor{blue}{However, in multi-cell systems, since all BSs without data sharing, each UE is served by a dedicated BS and suffers from the co-channel interference caused by other UEs in the associated cell, and the cross-cell interference coming from adjacent cells also need to consider, especially for the UEs which are located close to the common boundary of multiple cells. 
Interference emerges as the one key capacity limiting factor in multi-cell systems. 
Unlike multi-cell systems, in the past decade, the cell-free networks, where there are no cells or cell boundaries and do not assign UEs to particular BSs, have been proposed as a potential technology to significantly improve the system performance with the multi-cell system \cite{7827017}.}  
Cell-free networks advocate a more active treatment of interference, where there are a large number of BSs serve multiple UEs in a large area of service cooperatively and simultaneously through exploiting a central processing unit (CPU) for optimizing the linear transmit beamforming or controlling the transmit powers \cite{7917284,9354156}.

Instead of cooperating through backhaul links, the relay-assisted communications can be beneficial not only in alleviating the channel fading effects but also in mitigating the interference \cite{5594708}.
Recently, intelligent reflecting surfaces (IRSs) have emerged as a promising technology for mitigating the detrimental propagation conditions and strengthening the desired signal powers through introducing additional links \cite{9140329,di2019smart,9110915}.  
IRSs are composed of a vast number of passive phase shifts, each of which can reconfigure the incident signals to desired directions with considerable array gains, to improve the system performance near-instantaneously. 
\textcolor{blue}{Besides, each phase shift can change the phase of the incident signal continuously or discretely.
Due to the hardware limitations, the discrete IRS model is more practical, where the phase shifts can only take finite discrete values from the discrete phase set and can be implemented by exploiting the positive-intrinsic-negative (PIN) diodes technique \cite{9279253,8930608,9352948}.}

However, the gains achieved by IRSs critically depend on the perfect channel state information (CSI), which is challenging to acquire due to IRSs are passive and can not sense channels. \textcolor{blue}{Some works designed the transmission protocol to separately estimate the IRS-related cascaded channels (i.e., BS-IRS-UE) and the direct channels (i.e., BS-UE) by performing an ON/OFF strategy \cite{9039554} of the phase shifts. 
The authors in \cite{9130088} proposed a novel three-phase pilot-based channel estimation framework for uplink multiple UEs communications, where the direct channels, the IRS-related cascaded channels of a typical UE, and the IRS-related cascaded channels of the other UEs are estimated in Phase \uppercase\expandafter{\romannumeral1}, \uppercase\expandafter{\romannumeral2}, and \uppercase\expandafter{\romannumeral3}, respectively.}
However, the implementing cost and computing complexity depend on the scale of IRSs, and it is very challenging to realize. 
To reduce the complexity, the authors in \cite{8937491} grouped the elements and then divided them into several sub-IRSs. 
Generally, the channels in IRS-assisted systems are estimated at the BSs through uplink pilots, and the so-obtained CSI are used to procode the transmitting data in the downlink. With the aid of a smart IRS controller, which can design the passive reflecting beamforming to achieve several objectives, e.g., transmitting power minimization \cite{9133435}, energy-efficient maximization \cite{9423652,8741198}, max-min fairness \cite{9246254}, and so on \cite{9120476,9316920,8955968}.

 Remarkably, the achievable sum-rate maximization problem is one of the key challenges faced by designers of deploying IRSs into communication systems. 
 The work \cite{8982186} considered the problem under MISO setting and proposed an efficient iterative alternating (IA) algorithm based on vector forms of the fractional programming methods \cite{shen2018fractional,shen2018fractional2}, i.e., Lagrangian dual transform (LDT) and quadratic transform (QT), and through joint optimizing the active transmit beamforming and the passive reflecting beamforming to maximize the sum-rate. 
 The authors in \cite{9394419} considered the multiple IRSs assisted full-duplex system, and they proposed a gradient ascent based resource allocation design algorithm to maximize the sum-rate of the uplink-downlink system. 
 \textcolor{blue}{The authors in \cite{9388932} investigated an IRS-assisted vehicular network and proposed a pair of algorithms to solve both the single UE case and the multiple UEs case, which were based on the successive refinement (SR) algorithm and alternating optimization method, respectively.
 The authors in \cite{9076830} maximized the sum-rate of all groups for IRS-assisted multiple groups multiple casts communications systems by proposing a majorization-minimization (MM) \cite{7547360} based algorithm. 
 The works \cite{9154244,9459505} investigated the problem in the cell-free scenario and consider a perfect IRS model, where both the amplitude and phase associated with each reflecting element of IRSs can be controlled independently and continuously. Besides, they exploited a similar algorithm as \cite{8982186} to solve the problem.}
The authors in \cite{9090356} studied the achievable sum-rate maximization problem in a multi-cell scenario, and they employed the weighted minimum mean-square error (WMMSE) \cite{5756489} technique to transform the original problem into an equivalent form and proposed an efficient algorithm by capitalizing on the block coordinate descent (BCD) and MM / complex circle manifold (CCM) methods to solve it. 
\textcolor{blue}{The authors in \cite{9279253} deployed the IRS to assist the joint processing coordinated multi-point (JP-CoMP) transmission. They studied two cases, i.e., the single UE case and the multiple UEs case. Particularly, the authors proposed a computational efficient MM-based algorithm to solve the problem in a single UE case and employed the second-order cone programming (SOCP) and semi-definite relaxation (SDR) \cite{6698378} techniques to solve the problem under multiple UEs case.} 

Motivated by the discussions as mentioned above, in this paper, we consider maximizing the achievable sum-rate of the distributed multiple IRSs assisted cell-free MIMO communication system. 
Since the optimizing variables are intricately coupled, and the constant modulus constraints introduced by IRSs, the formulated problem is non-convex and challenging to solve. To this end, we propose an efficient framework to jointly design the BSs, the UEs, and the IRSs. The main contributions of this paper are summarized as follows
\begin{itemize}
    \item As a compromise approach, we first transform the original non-convex problem to an equivalent form based on the fractional programming methods, i.e., the matrix forms of LDT and QT, and then decompose the reformulated problem into two subproblems, i.e., the active transmit beamforming matrices optimization and the passive reflecting beamforming matrices optimization. Consequently, we solve the subproblems alternating iteratively;
    \item For the active transmit beamforming matrices optimization subproblem, we reformulate it as a quadratically constrained quadratic programming (QCQP) problem. \textcolor{blue}{Based on the fact that the reformulated problem is convex and the dual gap is guaranteed to be zero, we propose a Lagrangian dual sub-gradient based algorithm to derive the locally optimal solutions of the active transmit beamforming matrices in nearly closed-forms};
    \item For solving the passive reflecting beamforming matrices optimization subproblem, we first transform the problem into a tractable constant modulus constrained quadratic programming (CMC-QP) problem. \textcolor{blue}{Then we provide high-quality solutions by proposing a pair of relaxation-based algorithms, i.e., SDR and quadratic constraint relaxation (QCR) techniques. We also develop a low-complexity alternate sequential optimization (ASO) algorithm for solving the CMC-QP problem with the closed-form solutions. Particularly, we turn the CMC-QP problem into multiple one-dimensional optimization problems and solve them one-by-one while fixing the others. All three algorithms are guaranteed to converge to locally optimal solutions. 
    Meanwhile, the proposed ASO-based algorithm can extend to the discrete phase shift case after minor changes and employ an exhaustive-search strategy to solve it with low-complexity
    };
    \item Simulation results demonstrate the significant performance improvement achieved by deploying IRSs.
    \textcolor{blue}{Besides, the proposed algorithms outperform the benchmark schemes and have strong robustness to the CSI estimation errors}. 
\end{itemize}

\textit{Notations:} Vectors and matrices are presented by bold-face lower-case and upper-case letters, respectively. $\mathcal C\mathcal N\left(\mathbf 0, \mathbf I\right)$ denotes the circularly symmetric complex Gaussian (CSCG) distribution with zero mean and covariance matrix $\mathbf I$.  $\mathbf A^{\operatorname{H}}$, $\operatorname{Tr}\left(\mathbf A\right)$, and $\operatorname{Rank}\left(\mathbf A\right)$ denote the conjugate transpose, trace, and rank of the matrix $\mathbf A$, respectively. $\operatorname{Re}\left\{a\right\}$ is the real part of $a$.

\section{SYSTEM MODEL AND PROBLEM FORMULATION}
\begin{figure}
    \centering
    \includegraphics[width=0.5\linewidth]{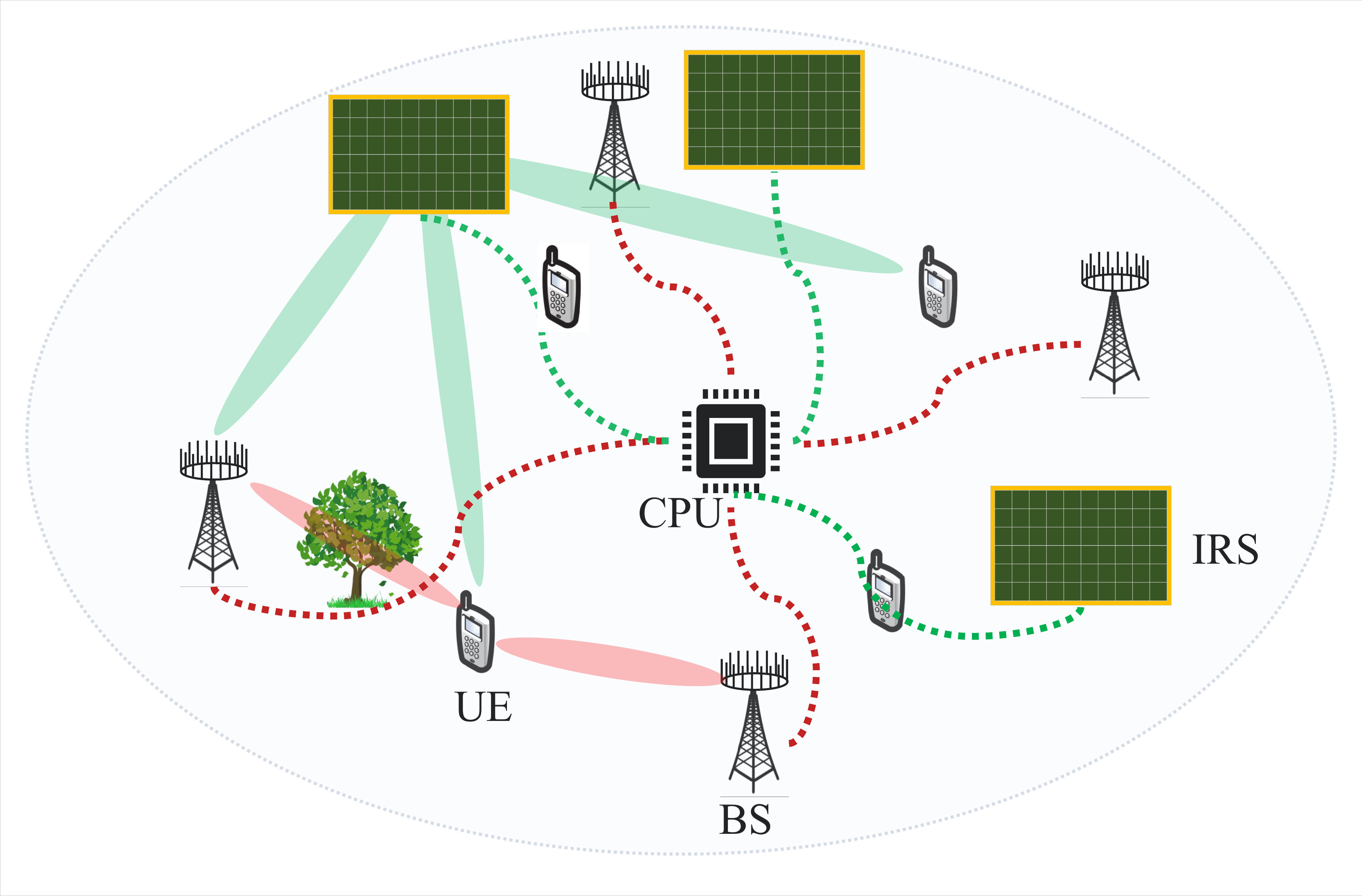}
    \caption{IRSs-assisted cell-free MIMO System Model.}
 \label{f1} 
\end{figure}
\textcolor{blue}{In this section, as shown in Fig. \ref{f1}, we describe distributed multiple IRSs assisted cell-free MIMO downlink communication systems, where $L$ distributed BSs and $R$ distributed IRSs are linked to a CPU and serve $K$ UEs cooperatively, and each BS, IRS, and UE are equipped with $M_b$ antennas, $N$ phase shifts and $M_u$ antennas, respectively.}  

\subsection{System Model}

The signal transmitted from the $L$ BSs can be mathematically expressed as
\begin{align}\label{eq1}
\mathbf{x} = \sum_{l=1}^L{\sum_{k = 1}^K \mathbf{W}_{l,k}\mathbf{s}_{l,k}}, \quad \forall\, l \in L,\forall k \in K,
\end{align}
where ${\mathbf s_{l,k}} \sim\mathcal C\mathcal N\left(\mathbf 0, \mathbf I_{M_u}\right)$ is the data symbol vector from the $l$-th BS intend for the $k$-th UE, and 
$\mathbf W _{l,k} \in {\mathbb{C}^{M_b \times M_u}}$ is the corresponding active transmit beamforming matrix from the $l$-th BS to the $k^{th}$ UE.
\textcolor{blue}{The received signal at the $k$-th UE is defined as
\begin{align}\label{eq2}
{{\mathbf y_{k}}} = \sum_{l=1}^L{\mathbf H _{l,k}^{\operatorname{H}} \mathbf W _{l,k} \mathbf s_{l,k}}+\sum_{l=1}^L\sum_{i=1,i\ne k}^K{\mathbf H _{l,k}^{\operatorname{H}} \mathbf W _{l,i} \mathbf s_{l,i}}+\mathbf n_k,\forall k \in K,
\end{align}
where $\mathbf n_{k} \sim  \mathcal{C}\mathcal{N}\left( {\mathbf 0,{\sigma ^2\mathbf {I}_{M_u}}} \right)$ is the background additive white Gaussian noise (AWGN) vector at the $k$-th UE, and $\mathbf H_{l,k} \in \mathbb{C}^{M_b \times M_u}$ denotes the equivalent channel spanning from the $l$-th BS to the $k$-th UE, each of which can be expressed as
\begin{align}\label{eq3}
{\mathbf H_{l,k}^{\operatorname{H}} = \mathbf D_{l,k}^{\operatorname{H}} + \sum_{r=1}^R{\mathbf G_{r,k}^{\operatorname{H}}\Theta_r {\mathbf S_{l,r}}}},
\end{align}
where ${\mathbf D_{l,k} \in {\mathbb{C}^{M_b \times M_u}}}$ denotes the direct channel from the $l$-th BS to the $k$-th UE, and ${\mathbf G_{r,k} \in \mathbb{C}^{N \times M_u}}$ is the channel from the $r$-th IRS to the $k$-th UE. The channel from the $l$-th BS to the $r$-th IRS is represented by ${\mathbf S_{l,r} \in {\mathbb{C}^{N \times M_b}}}$. With the IRSs, the transmitted signal can be dynamically altered by its $N$ phase shifts, where the passive reflecting beamforming matrix are denoted as
\begin{align}\label{eq4}
    \Theta_r=\alpha\operatorname{diag}\left(e^{j{\phi_{r,1}}},e^{j{\phi_{r,2}}},\cdots,e^{j{\phi_{r,N}}}\right),\forall r \in R.
\end{align}
where $\alpha \in \left[ {0,1} \right]$ denotes the reflecting efficiency of IRSs (the signal power loss is caused by the signal absorption of the phase shifts), and  $\phi_{r,n} \in \left[0,2\pi\right)$ is the phase of the $n$-th phase shift of the $r$-th IRS.} Meanwhile, we also investigate the discrete phase shift case in Sec.\ref{discretecase}.

\subsection{Channel Model}
We assume the small-scale fading of the direct channels (i.e., $\mathbf D_{l,k},\forall\left\{{l,k}\right\}$) and the IRS-related channels (i.e., $\mathbf G_{r,k},\forall\left\{{r,k}\right\}$ and $\mathbf S_{l,r},\forall \left\{l,r\right\}$) are Rayleigh fading and Rician fading \cite{9090356,9279253}, respectively, which are modeled as 
\begin{align}
    \mathbf D_{l,k}&=\mathcal P_{d_{lk}}\left(d_{lk}\right)\tilde{\mathbf {D}}_{l,k},\quad \mathbf G_{r,k}=\mathcal P_{d_{rk}}\left(d_{rk}\right)\left(\sqrt{\frac{\beta_G}{1+\beta_G}}\bar{\mathbf {G}}_{r,k}+\sqrt{\frac{1}{1+\beta_G}}\tilde{\mathbf {G}}_{r,k}\right),\notag\\
    \mathbf S_{l,r}&=\mathcal P_{d_{lr}}\left(d_{lr}\right)\left(\sqrt{\frac{\beta_S}{1+\beta_S}}\bar{\mathbf {S}}_{l,r}+\sqrt{\frac{1}{1+\beta_S}}\tilde{\mathbf {S}}_{l,r}\right),
\end{align}
where $\bar{\mathbf {G}}_{r,k}$ ($\bar{\mathbf {S}}_{l,r}$) and $\tilde{\mathbf {G}}_{r,k}$ ($\tilde{\mathbf {D}}_{l,k},\tilde{\mathbf {S}}_{l,r}$) are the line-of-sight (LOS) and non-LOS (NLOS) components of the channels ${\mathbf {G}}_{r,k}$ (${\mathbf {D}}_{l,k},{\mathbf {S}}_{l,r}$), and $\mathcal P_{d_{i}}\left(d_{i}\right), \forall i \in \left\{lk,rk,lr\right\}$ are the distance-dependent large-scale path loss, which are defined as 
\begin{align}
    \mathcal P_{d_{i}}\left(d_{i}\right)= \mathcal C_0 \left(\frac{d_i}{d_0}\right)^{-p_i},\forall i \in \left\{lk,rk,lr\right\},
\end{align}
where $\mathcal C_0$ is the path loss at the reference distance $d_0=1$m, and $d_i$ and $p_i$ denote the distances and path loss exponents between $l$-th BS with $k$-th UE link (between $r$-th IRS with $k$-th UE link, between $l$-th BS with $r$-th IRS link), respectively. 

\textcolor{blue}{Meanwhile, we further assume that the BSs and the UEs are equipped with uniform linear arrays (ULAs), and the IRSs are modeled as uniform planar arrays (UPAs) \cite{9133435,9423652,9279253}.  We set $\bar d={\bar\omega}/{2}$, where $\bar d$ and $\bar\omega$ denote antenna spacing and wavelength, respectively. The departure (arrival) antenna steering vectors at the $l$-th BS ($k$-th UE) are given by 
\begin{align}
    {\mathbf a_{l}}\left( {{\vartheta_l ^{AoD}}} \right) &= {\left[ {1,{e^{j{{\pi}}\sin {\vartheta_l ^{AoD}}}},\cdots,{e^{j{{\pi }}\left( {{M_b} - 1} \right)\sin {\vartheta_l ^{AoD}}}}} \right]^{\operatorname{T}}},\notag\\
    {\mathbf a_{k}}\left( {{\vartheta_k ^{AoA}}} \right) &= {\left[ {1,{e^{j{{\pi }}\sin {\vartheta_k ^{AoA}}}},\cdots,{e^{j{{\pi }}\left( {{M_u} - 1} \right)\sin {\vartheta_k ^{AoA}}}}} \right]^{\operatorname{T}}},
\end{align}  
where $\vartheta_l ^{AoD}$ and $\vartheta_k ^{AoA}$ are the angles of departure and arrival, respectively. The antenna steering vector at $r$-th IRS is modeled as 
\begin{align}
    {\mathbf  a}_{r}\left(\kappa_r,\varphi_r\right)={\mathbf  a}_{r}^v\left(\kappa_r,\varphi_r\right) \otimes {\mathbf  a}_{r}^h\left(\varphi_r\right),
\end{align}
where 
\begin{align}
    {\mathbf  a}_{r}^v\left(\kappa_r,\varphi_r\right)&={\left[ {1,{e^{j{{\pi}}\sin {\kappa_r}\sin {\varphi_r}}},\cdots,{e^{j{{\pi}}\left( {{N_v} - 1} \right)\sin {\kappa_r}\sin {\varphi_r}}}} \right]^{\operatorname{T}}},\notag\\
    {\mathbf  a}_{r}^h\left(\varphi_r\right)&={\left[ {1,{e^{j{{\pi}}\cos {\varphi_r}}},\cdots,{e^{j{{\pi}}\left( {{N_h} - 1} \right)\cos {\varphi_r}}}} \right]^{\operatorname{T}}},
\end{align}
where ${\kappa_r}$ and $\varphi_r$ are the azimuth and elevation angle, respectively, and where $N_v$ and $N_h$ are the number of phase shifts along the vertical and horizontal directions, respectively, and satisfy $N=N_v N_h$.
Therefore, the LOS components of ${\mathbf {G}}_{r,k}$ and ${\mathbf {S}}_{l,r}$, i.e., $\bar{\mathbf {G}}_{r,k}$ and $\bar{\mathbf {S}}_{l,r}$ are defined as
\begin{align}
    \bar{\mathbf {G}}_{r,k}={\mathbf a_{k}}\left( {{\vartheta_k ^{AoA}}} \right)\hat{\mathbf  a}_{r}^{\operatorname{H}}\left(\kappa_r^{AoD},\varphi_r^{AoD}\right),\quad
    \bar{\mathbf {S}}_{l,r}=\bar{\mathbf  a}_{r}\left(\kappa_r^{AoA},\varphi_r^{AoA}\right){\mathbf a_{l}^{\operatorname{H}}}\left( {{\vartheta_l ^{AoD}}} \right),
\end{align}
where $\hat{\mathbf  a}_{r}\left(\kappa_r^{AoD},\varphi_r^{AoD}\right)$ and $\bar{\mathbf  a}_{r}\left(\kappa_r^{AoA},\varphi_r^{AoA}\right)$ are departure and arrival antenna steering vectors at $r$-th IRS, respectively.}

\begin{figure}
    \centering
    \includegraphics[width=0.5\linewidth]{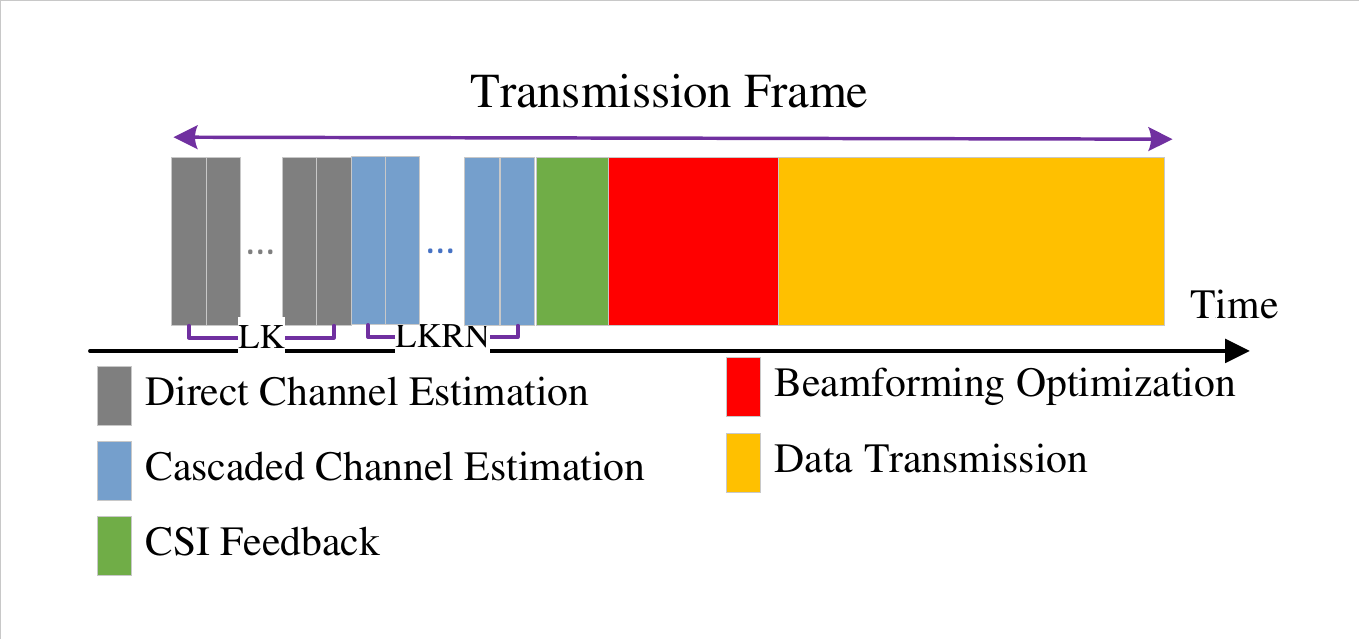}
    \caption{Transmission protocol.}
 \label{ftransmission} 
\end{figure}

As shown in Fig. \ref{ftransmission}, in this paper, \textcolor{blue}{to characterize the theoretical limit of the considered cell-free system, we assume that the full CSI knowledge on all links involved (includes $LK$ direct channels and $LKRN$ IRS-related cascaded channels) are available at the BSs perfectly by using the customized CSI estimation schemes\footnote{
In Sec.\ref{imcsi}, we also investigate the robustness of the proposed jointly design framework to the CSI estimation errors. However, the robust transmission design scheme \cite{9117093,9180053,5982443} with inaccurate CSI needs to be proposed, which will be left as future work.}, and the so-obtained CSI can be feedback to the CPU.} Then the CPU uses the CSI to design the corresponding active transmit beamforming matrices and the passive reflecting beamforming matrices. Finally, the BSs and the IRSs exploit the beamforming matrices to precode and reconfigure the signals. 

\subsection{Problem Formulation}

Based on the aforementioned discussions, the signal to interference plus noise ratio (SINR) matrix at the $k$-th UE can be formulated as\textcolor{blue}{ 
\begin{align}\label{eq5}
    {\Gamma _{k}} =\sum_{l=1}^L{ \mathbf H_{l,k}^{\operatorname{H}}{\mathbf W_{l,k}}\mathbf W_{l,k}^{\operatorname{H}}{\mathbf H_{l,k}}}\mathbf V_k^{\operatorname{-1}},\forall k \in K,
\end{align}}
where \textcolor{blue}{
\begin{align}\label{eq6}
  \mathbf V_{k} = &\sum_{l = 1}^L\sum_{i=1,i\ne k}^K{\mathbf H_{l,k}^{\operatorname{H}}{\mathbf W_{l,i}}\mathbf W_{l,i}^{\operatorname{H}}{\mathbf H_{l,k}}}  + {\sigma ^2}{\mathbf {I}_{{M_u}}}. 
\end{align}}
In this paper, our objective is maximizing the achievable sum-rate of all UEs, i.e., $\mathcal{R}\left( {\mathbf W,\Theta} \right) = {\sum_{k = 1}^K {\log \left| {\mathbf I + {\Gamma _{k}}} \right|} }$, through jointly optimizing the active transmit beamforming matrices $\mathbf W=\left\{\mathbf W_{l,k},\forall {l,k}\right\}$ at $L$ BSs and the passive reflecting beamforming matrices $\Theta=\left\{\Theta_r,\forall r\right\}$ at $R$ IRSs. The achievable sum-rate maximization problem is formulated as
\begin{align}
{ \mathop {\max }_{\mathbf W ,\Theta } } \quad & {{\mathcal{R}}\left( {\mathbf W,\Theta } \right)} \label{eq7}\\
{\operatorname{s.t.} \quad} & {\sum_{k = 1}^K {{{\left\| {{\mathbf W _{l,k}}} \right\|}_F ^2}}  \le {P_{\max,l }},\forall\, l \in L}\tag{13a}\label{eq8},\\
&  \textcolor{blue}{\left  |\Theta_{r,n} \right |={\alpha},\forall r \in R, \forall n \in N \tag{13b}\label{eq9}}, 
\end{align} 
where constrains \eqref{eq8} limit the maximum transmit power of each BS and \eqref{eq9} represent the constant modulus constraint of each phase shift at the IRSs.

\section{Proposed Joint Optimization Algorithm}
The formulated Problem \ref{eq7} turns out to be non-convex and not tractable. \textcolor{blue}{As a compromise approach, we consider transforming the problem to an equivalent form by exploiting the fractional programming methods in matrix-forms, e.g., QT 
\cite[Corollary 1]{shen2018fractional2} and LDT \cite[Theorem 4]{shen2018fractional2}}.

First, we consider moving the matrix-ratio terms out of the logarithm function in $\mathcal R \left(\mathbf W,\Theta\right)$ by employing LDT method as follows.

\begin{proposition}[Matrix-form of LDT]\label{p1}
By introducing the auxiliary diagonal matrices $\mathbf U_k \in \mathbb{C} ^{M_u\times M_u}, \forall k \in K$, the sum-rate maximization Problem \ref{eq7} can be equivalently transformed as 
\begin{align}\label{eq8pro1}
{ \mathop {\max }_{\mathbf W ,\Theta, \mathbf U } } \quad & {{f_1}\left( {\mathbf W,\Theta,\mathbf U } \right)}\\
{\operatorname{s.t.} \quad} & {\eqref{eq8},\eqref{eq9}},\notag
\end{align}
where the new objective function ${f_1}\left( {\mathbf W,\Theta,\mathbf U } \right)$ is defined as  
\begin{align}\label{eq90}
{f_1}\left( {\mathbf W,\Theta ,\mathbf U } \right)  = \sum_{k=1}^K{\log\left|\mathbf I _{M_u}+\mathbf U_k\right|}-\sum_{k=1}^K{\operatorname{Tr}\left(\mathbf U_k\right)} +\sum_{k=1}^K{\operatorname{Tr}\left(\left(\mathbf I _{M_u}+\mathbf U_k\right)f_2\left(\mathbf W,\Theta\right)\right)},
\end{align}
where $f_2\left ( \mathbf W, \Theta \right )$ included in the function is given as
\begin{align}\label{eq10}
    {f_2}\left( {\mathbf W,\Theta } \right) = \sum_{l=1}^L{\mathbf H_{l,k}^{\operatorname{H}}{\mathbf W_{l,k}} \mathbf W_{l,k}^{\operatorname{H}}{\mathbf H_{l,k}}}\mathbf{\bar{V}}_{k}^{-1},
\end{align}
where
\begin{align}\label{eq11}
    \mathbf{\bar V}_{k} = \sum_{l = 1}^L {\sum_{i = 1}^K {\mathbf H_{l,k}^{\operatorname{H}}{\mathbf W_{l,i}}\mathbf W_{l,i}^{\operatorname{H}}{\mathbf H_{l,k}}} }  + {\sigma ^2}{\mathbf I_{{M_u}}}.
\end{align}
\end{proposition}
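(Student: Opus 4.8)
\noindent\emph{Proof idea.} Since the power constraints \eqref{eq8} and the constant-modulus constraints \eqref{eq9} do not involve the auxiliary variables $\mathbf U=\{\mathbf U_k\}$, the plan is to establish the pointwise identity $\max_{\mathbf U}f_1(\mathbf W,\Theta,\mathbf U)=\mathcal R(\mathbf W,\Theta)$ for every feasible $(\mathbf W,\Theta)$; the two problems then have the same optimal value and the same optimal $(\mathbf W,\Theta)$. Fix $(\mathbf W,\Theta)$ and write $\mathbf S_k:=\sum_{l=1}^{L}\mathbf H_{l,k}^{\operatorname{H}}\mathbf W_{l,k}\mathbf W_{l,k}^{\operatorname{H}}\mathbf H_{l,k}$, so that by the definitions \eqref{eq5}, \eqref{eq6}, \eqref{eq10} and \eqref{eq11} one has $\Gamma_k=\mathbf S_k\mathbf V_k^{-1}$, $\bar{\mathbf V}_k=\mathbf V_k+\mathbf S_k$ and $f_2(\mathbf W,\Theta)=\mathbf S_k\bar{\mathbf V}_k^{-1}$. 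In $f_1$ the $\mathbf U_k$ are separated across $k$, so for fixed $(\mathbf W,\Theta)$ the maximization decouples into $K$ independent problems, one per summand $h_k(\mathbf U_k):=\log|\mathbf I_{M_u}+\mathbf U_k|-\operatorname{Tr}(\mathbf U_k)+\operatorname{Tr}\!\big((\mathbf I_{M_u}+\mathbf U_k)\mathbf S_k\bar{\mathbf V}_k^{-1}\big)$.

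The first step is to note that $h_k$ is concave in $\mathbf U_k$ (the log-determinant term is concave, the remaining two are affine in $\mathbf U_k$ for fixed $(\mathbf W,\Theta)$), so setting the gradient to zero is necessary and sufficient for the global maximum. Carrying out the differentiation through the matrix logarithm and the traces, with the conjugate-transpose bookkeeping kept straight, yields the stationarity condition $(\mathbf I_{M_u}+\mathbf U_k)^{-1}=\mathbf I_{M_u}-\mathbf S_k\bar{\mathbf V}_k^{-1}$. The key algebraic step is then
\begin{align}
\mathbf I_{M_u}-\mathbf S_k\bar{\mathbf V}_k^{-1}=(\bar{\mathbf V}_k-\mathbf S_k)\bar{\mathbf V}_k^{-1}=\mathbf V_k\bar{\mathbf V}_k^{-1},\notag
\end{align}
which is invertible because $\mathbf V_k$ is positive definite; hence $\mathbf I_{M_u}+\mathbf U_k^{\star}=\bar{\mathbf V}_k\mathbf V_k^{-1}=\mathbf I_{M_u}+\mathbf S_k\mathbf V_k^{-1}=\mathbf I_{M_u}+\Gamma_k$, i.e.\ $\mathbf U_k^{\star}=\Gamma_k$.

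It remains to substitute $\mathbf U_k^{\star}$ back into $h_k$ and check that the last two terms cancel. Using $\mathbf I_{M_u}+\mathbf U_k^{\star}=\bar{\mathbf V}_k\mathbf V_k^{-1}$ and the cyclic property of the trace,
\begin{align}
\operatorname{Tr}\!\big((\mathbf I_{M_u}+\mathbf U_k^{\star})\mathbf S_k\bar{\mathbf V}_k^{-1}\big)=\operatorname{Tr}\!\big(\bar{\mathbf V}_k\mathbf V_k^{-1}\mathbf S_k\bar{\mathbf V}_k^{-1}\big)=\operatorname{Tr}\!\big(\mathbf V_k^{-1}\mathbf S_k\big)=\operatorname{Tr}(\Gamma_k)=\operatorname{Tr}(\mathbf U_k^{\star}),\notag
\end{align}
so $h_k(\mathbf U_k^{\star})=\log|\mathbf I_{M_u}+\Gamma_k|$, and summing over $k$ gives $\max_{\mathbf U}f_1(\mathbf W,\Theta,\mathbf U)=\sum_{k=1}^{K}\log|\mathbf I_{M_u}+\Gamma_k|=\mathcal R(\mathbf W,\Theta)$, as claimed. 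I expect the only delicate point to be the stationarity computation together with the observation that it must be the \emph{full} covariance $\bar{\mathbf V}_k$ — not the interference-only $\mathbf V_k$ — that sits inside $f_2$, since this is precisely what makes the identity $\mathbf I_{M_u}+\Gamma_k=(\mathbf I_{M_u}-\mathbf S_k\bar{\mathbf V}_k^{-1})^{-1}$ and the subsequent trace cancellation work; the remainder is routine and amounts to the matrix Lagrangian dual transform of \cite[Theorem 4]{shen2018fractional2} specialized to the channel model \eqref{eq3}.
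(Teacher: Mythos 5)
Your proof is correct, and it takes a genuinely different route from the paper's. The paper derives $f_1$ constructively via a Lagrangian dual transform: it rewrites Problem \eqref{eq7} with explicit equality constraints $\Gamma_k=\mathbf U_k$, forms the Lagrangian $\sum_k\log|\mathbf I+\mathbf U_k|+\sum_k\operatorname{Tr}\left(\Upsilon_k(\Gamma_k-\mathbf U_k)\right)$, invokes strong duality, solves the stationarity condition $(\mathbf I+\mathbf U_k)^{-1}=\Upsilon_k$ together with the constraint to obtain the optimal multipliers $\Upsilon_k=\mathbf V_k\bar{\mathbf V}_k^{-1}$, and substitutes them back to arrive at $f_1$. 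You instead take the target objective $f_1$ as given and verify the equivalence directly by partial maximization: since the constraints \eqref{eq8}--\eqref{eq9} do not involve $\mathbf U$, you show per-UE concavity of $h_k$, solve the first-order condition to get $\mathbf U_k^{\star}=\Gamma_k$ via the identity $\mathbf I-\mathbf S_k\bar{\mathbf V}_k^{-1}=\mathbf V_k\bar{\mathbf V}_k^{-1}$, and check the trace cancellation so that $\max_{\mathbf U}f_1(\mathbf W,\Theta,\mathbf U)=\mathcal R(\mathbf W,\Theta)$ pointwise. Your route is more self-contained and slightly more rigorous (no appeal to strong duality is needed, only concavity of $\log\det$ over Hermitian positive-definite arguments plus invertibility of $\mathbf V_k\succ\sigma^2\mathbf I$), and as a bonus it directly produces the closed-form update $\mathbf U_k^{\star}=\Gamma_k$ used later in \eqref{eq15}; the paper's derivation, on the other hand, explains where the specific form of $f_1$ comes from and ties the result to the cited LDT theorem of \cite{shen2018fractional2}. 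The only cosmetic caveat is that the proposition's phrase ``diagonal matrices'' is the paper's own looseness (since $\Gamma_k$ need not be diagonal), and your treatment of $\mathbf U_k$ as general Hermitian matrices is the consistent reading.
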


\begin{proof}
The proof is presented in Appendix \ref{ap1}.
\end{proof} 

Since the optimization variables are still intricately coupled in the matrix-ratio terms, Problem \ref{eq8pro1} is still non-convex and challenging to handle directly. Based on the QT method, we have the following proposition.

\begin{proposition}[Matrix Form of QT]\label{p2}
By introducing the auxiliary diagonal matrices $\mathbf Y_k\in \mathbb{C}^{M_u \times M_u}$, $\forall k$, Problem \ref{eq8pro1} can be equivalently transformed as
\begin{align}\label{eq13}
    \mathop{\max}_{\mathbf W, \Theta, \mathbf U, \mathbf Y} \quad & f_3\left(\mathbf W, \Theta, \mathbf U, \mathbf Y\right)\\
    {\operatorname{s.t.}}\quad &\eqref{eq8},\eqref{eq9},\notag
\end{align}
where $f_3\left(\mathbf W, \Theta, \mathbf U, \mathbf Y\right)$ is formulated as 
\begin{align}
    &f_3\left(\mathbf W, \Theta, \mathbf U, \mathbf Y\right)=\sum_{k=1}^K{\log\left|\mathbf {\bar U}_k\right|}-\sum_{k=1}^K{\operatorname{Tr}\left(\mathbf U_k\right)}+\sum_{k=1}^K{\operatorname{Tr}\left(\mathbf {\bar U}_k \mathbf Y_k^{\operatorname{H}}\sum_{l=1}^L{\mathbf H_{l,k}^{\operatorname{H}}\mathbf W_{l,k}}\right)}\notag\\
    &+\sum_{k=1}^K{\operatorname{Tr}\left(\mathbf {\bar U}_k \sum_{l=1}^L{\mathbf W_{l,k}^{\operatorname{H}}\mathbf H_{l,k}}\mathbf Y_k\right)}-\sum_{k=1}^K{\operatorname{Tr}\left({\mathbf{\bar U}_k}\mathbf Y_k^{\operatorname{H}}\sum_{l=1}^L\sum_{i=1}^K{\mathbf H_{l,k}^{\operatorname{H}}\mathbf W_{l,i}\mathbf W_{l,i}^{\operatorname{H}}\mathbf H_{l,k}}\mathbf Y_{k}\right)}-\sum_{k=1}^K{\operatorname{Tr}\left(\sigma_k^2{\mathbf{\bar U}_k}{\mathbf Y_k^{\operatorname{H}}}\mathbf Y_{k}\right)}.
\end{align}
where $\mathbf {\bar U}_k \triangleq {\mathbf {U} _{k}}+\mathbf I_{M_u},\forall k \in K$.
\end{proposition}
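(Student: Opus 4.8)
The plan is to prove Proposition~\ref{p2} by a single application of the matrix-form quadratic transform~\cite[Corollary~1]{shen2018fractional2} to the one matrix-fractional term that survives Proposition~\ref{p1}. After the Lagrangian dual transform, the objective $f_1\left(\mathbf W,\Theta,\mathbf U\right)$ in~\eqref{eq90} splits into three sums, and only the last one, $\sum_{k=1}^K \operatorname{Tr}\left(\left(\mathbf I_{M_u}+\mathbf U_k\right) f_2\left(\mathbf W,\Theta\right)\right)$, still couples $\mathbf W$ and $\Theta$ through a matrix inverse; the other two, $\sum_{k=1}^K \log\left|\mathbf I_{M_u}+\mathbf U_k\right| = \sum_{k=1}^K \log\left|\mathbf{\bar U}_k\right|$ and $-\sum_{k=1}^K \operatorname{Tr}\left(\mathbf U_k\right)$, involve neither $\mathbf W$ nor $\Theta$ and are copied verbatim into $f_3$. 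So the task reduces to replacing, for each $k$, the term $\operatorname{Tr}\left(\mathbf{\bar U}_k f_2\left(\mathbf W,\Theta\right)\right)$ by a surrogate that is jointly quadratic in $\left(\mathbf W,\Theta\right)$ for fixed $\left(\mathbf U,\mathbf Y\right)$ and whose partial maximum over a fresh auxiliary matrix $\mathbf Y_k$ reproduces it exactly; the power and constant-modulus constraints~\eqref{eq8},~\eqref{eq9} play no role in the transformation and are carried along unchanged.

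Mechanically, I would first put $\operatorname{Tr}\left(\mathbf{\bar U}_k f_2\right)$ into the canonical matrix-ratio form. With $\mathbf F_k \triangleq \sum_{l=1}^L \mathbf H_{l,k}^{\operatorname H}\mathbf W_{l,k}$ and $\mathbf{\bar V}_k$ as in~\eqref{eq11}, this term is of the type $\operatorname{Tr}\left(\mathbf{\bar U}_k\, \mathbf F_k^{\operatorname H}\mathbf{\bar V}_k^{-1}\mathbf F_k\right)$; since $\mathbf{\bar U}_k = \mathbf I_{M_u}+\mathbf U_k \succeq \mathbf I_{M_u}$ on the domain of interest (recall from the proof of Proposition~\ref{p1} that the maximizing $\mathbf U_k$ equals $\Gamma_k\succeq\mathbf 0$), one may absorb the Hermitian square root $\mathbf{\bar U}_k^{1/2}$ into the numerator, exhibiting exactly the bare ratio form to which~\cite[Corollary~1]{shen2018fractional2} applies. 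Introducing $\mathbf Y_k$, the corollary replaces the term by $\operatorname{Tr}\left(\mathbf{\bar U}_k \mathbf Y_k^{\operatorname H}\mathbf F_k\right) + \operatorname{Tr}\left(\mathbf{\bar U}_k \mathbf F_k^{\operatorname H}\mathbf Y_k\right) - \operatorname{Tr}\left(\mathbf{\bar U}_k \mathbf Y_k^{\operatorname H}\mathbf{\bar V}_k \mathbf Y_k\right)$. Finally, substituting $\mathbf F_k = \sum_{l=1}^L \mathbf H_{l,k}^{\operatorname H}\mathbf W_{l,k}$ in the two cross terms and expanding $\mathbf{\bar V}_k = \sum_{l=1}^L\sum_{i=1}^K \mathbf H_{l,k}^{\operatorname H}\mathbf W_{l,i}\mathbf W_{l,i}^{\operatorname H}\mathbf H_{l,k} + \sigma^2\mathbf I_{M_u}$ in the quadratic term (so that the trace splits into the double-sum interference-plus-signal part and the additive noise part) yields precisely $f_3\left(\mathbf W,\Theta,\mathbf U,\mathbf Y\right)$ as displayed, the $\sigma_k^2$ in the last term being $\sigma^2$.

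It remains to certify that this substitution is an exact equivalence, and I would do so in three short steps. (i) For fixed $\left(\mathbf W,\Theta,\mathbf U\right)$ the surrogate is strictly concave in $\mathbf Y_k$: its only second-order term is $-\operatorname{Tr}\left(\mathbf{\bar U}_k \mathbf Y_k^{\operatorname H}\mathbf{\bar V}_k \mathbf Y_k\right) = -\left\| \mathbf{\bar V}_k^{1/2}\mathbf Y_k \mathbf{\bar U}_k^{1/2}\right\|_F^2$, which is negative definite because $\mathbf{\bar V}_k\succ\mathbf 0$ (it contains $\sigma^2\mathbf I_{M_u}$) and $\mathbf{\bar U}_k\succ\mathbf 0$. (ii) Vanishing of the Wirtinger derivative with respect to $\mathbf Y_k^{*}$ gives the unique stationary point $\mathbf Y_k^{\star} = \mathbf{\bar V}_k^{-1}\mathbf F_k$, which by (i) is the global maximizer over $\mathbf Y_k$. (iii) Back-substituting $\mathbf Y_k^{\star}$ collapses the surrogate to $\operatorname{Tr}\left(\mathbf{\bar U}_k \mathbf F_k^{\operatorname H}\mathbf{\bar V}_k^{-1}\mathbf F_k\right) = \operatorname{Tr}\left(\mathbf{\bar U}_k f_2\left(\mathbf W,\Theta\right)\right)$. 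Hence $\max_{\mathbf Y} f_3 = f_1$ for every feasible $\left(\mathbf W,\Theta,\mathbf U\right)$, so maximizing $f_3$ jointly over $\left(\mathbf W,\Theta,\mathbf U,\mathbf Y\right)$ under the unchanged constraints has the same optimal value as Problem~\ref{eq8pro1}, and the optimal $\left(\mathbf W,\Theta\right)$ coincide. I expect the main obstacle to be the bookkeeping in steps (ii)--(iii): one must carry the weight $\mathbf{\bar U}_k$ consistently through the square-root reduction and the re-expansion, so that the $\mathbf Y_k$-optimum of the surrogate reproduces exactly the placement of $\mathbf{\bar U}_k$ relative to $f_2$ in~\eqref{eq10}; once $\mathbf{\bar V}_k\succ\mathbf 0$ and $\mathbf{\bar U}_k\succ\mathbf 0$ are secured, the concavity and the stationarity computation are routine and amount to re-deriving~\cite[Corollary~1]{shen2018fractional2} in the present notation.
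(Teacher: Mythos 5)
Your proposal is correct and takes essentially the same route as the paper's own proof: exhibit the surrogate's strict concavity in $\mathbf Y_k$ (via $\mathbf{\bar V}_k \succ \mathbf 0$, $\mathbf{\bar U}_k \succ \mathbf 0$), set the Wirtinger derivative to zero to get the MMSE-type closed-form maximizer, and back-substitute to recover Problem \ref{eq8pro1}, which is exactly the paper's argument (stated there with "convex" where concave is meant). The only discrepancy is bookkeeping: your stationary point $\mathbf Y_k^{\star}=\mathbf{\bar V}_k^{-1}\mathbf F_k$ places the inverse on the left, whereas the paper's \eqref{eq18} writes $\sum_{l}\mathbf H_{l,k}^{\operatorname{H}}\mathbf W_{l,k}\mathbf{\bar V}_k^{-1}$; your ordering is the one consistent with $f_3$ as displayed, and the difference does not affect the equivalence argument.
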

\begin{proof}\textcolor{blue}{
This proposition is extended QT method from the vector-form \cite[Corollary 1]{shen2018fractional2} to the matrix-form and can be easily proved as follows. 
Note that $f_3\left(\mathbf W, \Theta, \mathbf U, \mathbf Y\right)$ in Problem \ref{eq13} is convex with respect to $\mathbf Y_k,\forall k$. Thus, by setting the partial derivative of this objective function with respect to $\mathbf Y_k,\forall k$ to be zeros, we can obtain the optimal solutions in closed-forms, which is expressed as \eqref{eq18}.
Substituting the optimal $\mathbf Y_k,\forall k$ back into the objective function of Problem \ref{eq13} recovers Problem \ref{eq8pro1} equivalently. 
The proof is completed.}

\textcolor{blue}{Meanwhile, note that $\mathbf Y_k,\forall k$ can be treated as the decoding matrix to decode the received signals at the UEs, which is the same as the minimum mean-square error (MMSE) receive filter \cite[Sec.VI-A]{shen2018fractional2}. For more information of QT method, please refer to \cite{shen2018fractional,shen2018fractional2}.}
\end{proof}

Although the problem has been significantly simplified, Problem \ref{eq13} is still jointly non-convex. Fortunately,  $f_3\left(\mathbf W, \Theta, \mathbf U, \mathbf Y\right)$ is convex with respect to any one of the four variables $\mathbf U$, $\mathbf Y$, $\mathbf W$, and $\Theta$ when the other three are fixed. Therefore, in the following subsections, we decompose Problem \ref{eq13} into several subproblems and solve them alternating iteratively. For notation convenience, the solutions after the $t$-th iteration are denoted by $\left(\cdot\right)^{\left(\operatorname{t+1}\right)}$.

\subsection{Optimization of SINR}
\textcolor{blue}{First, we consider optimizing $\mathbf U$, i.e., SINR.} With fixed $\mathbf W^{\left(\operatorname{t}\right)}$, $\Theta^{\left(\operatorname{t}\right)}$ and $\mathbf Y^{\left(\operatorname{t}\right)}$, the optimal solution of $\mathbf U^{\left(\operatorname{t+1}\right)}$ can be obtained by solving the following subproblem
\begin{align}\label{eq14}
    \mathbf U^{\left(\operatorname{t+1}\right)} \triangleq \arg\mathop{\max} _\mathbf U {f_3}\left( {\mathbf W^{\left(\operatorname{t}\right)},\Theta^{\left(\operatorname{t}\right)} ,\mathbf U,\mathbf Y^{\left(\operatorname{t}\right)} } \right).
\end{align}
Note that the auxiliary matrices $\mathbf U_k,\forall k$ only appear in the objective function of subproblem \ref{eq14} and do not exist in any constraints. Therefore, by setting the partial derivatives of the objective function in subproblem \ref{eq14} with respect to $\mathbf U_k,\forall k$ to be zeros, and after some matrix manipulations, the closed-form solutions are optimal given as
\begin{align}\label{eq15}
    \mathbf U_k^{\left(\operatorname{t+1}\right)}= \Gamma _k, \forall k.
\end{align}
Note that the variables $\mathbf W$, $\Theta$ and $\mathbf Y$ only exist in partial terms of ${f_3}\left( {\mathbf W,\Theta ,\mathbf U,\mathbf Y} \right)$, therefore, we recast the objective function as $f_3\left( {\mathbf W,\Theta ,\mathbf U,\mathbf Y} \right)\triangleq f_4\left( {\mathbf W,\Theta ,\mathbf Y} \right)+\operatorname{Const}\left(\mathbf U\right)$, where 
\begin{align}\label{eq16}
    f_4\left(\mathbf W, \Theta, \mathbf Y\right)&=
    \sum_{k=1}^K{\operatorname{Tr}\left(\mathbf {\bar U}_k\mathbf Y_k^{\operatorname{H}} \sum_{l=1}^L{\mathbf H_{l,k}^{\operatorname{H}}\mathbf W_{l,k}}\right)}+\sum_{k=1}^K{\operatorname{Tr}\left(\mathbf {\bar U}_k \sum_{l=1}^L{\mathbf W_{l,k}^{\operatorname{H}}\mathbf H_{l,k}}\mathbf Y_k\right)}\notag\\
    &-\sum_{k=1}^K{\operatorname{Tr}\left({\mathbf{\bar U}_k}{\mathbf Y_k^{\operatorname{H}}}\sum_{l=1}^L\sum_{i=1}^K{\mathbf H_{l,k}^{\operatorname{H}}\mathbf W_{l,i}\mathbf W_{l,i}^{\operatorname{H}}\mathbf H_{l,k}}\mathbf Y_{k}\right)}-\sum_{k=1}^K{\operatorname{Tr}\left(\sigma_k^2{\mathbf{\bar U}_k}{\mathbf Y_k^{\operatorname{H}}}\mathbf Y_{k}\right)},
\end{align}
and $\operatorname{Const}\left(\mathbf U\right)=\sum_{k=1}^K{\log\left|\mathbf {\bar U}_k\right|-\sum_{k=1}^K{\operatorname{Tr}\left(\mathbf U_k\right)}}$, which has no impact on the optimization of $\mathbf W$, $\Theta$ and $\mathbf Y$.
Therefore, with the fixed variable of $\mathbf U_k^{\operatorname{\left(t+1\right)}}, \forall k$, we can optimize other variables by only investigating $f_4\left(\mathbf W, \Theta, \mathbf Y\right)$.

\subsection{Optimization of Decoding Matrix}
Now, we optimize the variables $\mathbf Y_k, \forall k$. With the fixed variables of $\mathbf W^{\left(\operatorname{t}\right)}$, $\Theta^{\left(\operatorname{t}\right)}$, and $\mathbf U^{\left(\operatorname{t+1}\right)}$, the subproblem of optimization $\mathbf Y$ is given as
\begin{align}
    \mathbf Y^{\left(\operatorname{t+1}\right)} \triangleq \arg\mathop{\max} _\mathbf Y {f_4}\left( {\mathbf W^{\left(\operatorname{t}\right)},\Theta^{\left(\operatorname{t}\right)},\mathbf Y } \right).
\end{align}
\textcolor{blue}{Similar to the method for solving the subproblem of $\mathbf U$, the variable $\mathbf Y$ does not exist in any constraint set, hence, by setting the partial derivatives of ${f_4}\left( {\mathbf W^{\left(\operatorname{t}\right)},\Theta^{\left(\operatorname{t}\right)},\mathbf Y } \right)$ with regard to $\mathbf Y_k, \forall k$ to be zeros, the closed-form solutions of $\mathbf Y_k,\forall k$ are optimal expressed as
\begin{align}\label{eq18}
    \mathbf Y_k^{\left(\operatorname{t+1}\right)}=\sum_{l=1}^L{\mathbf H_{l,k}^{\operatorname{H}}\mathbf W_{l,k}}\mathbf {\bar V}_k^{\operatorname{-1}}, \forall k.
\end{align}}

\subsection{Optimization of Active Transmit Beamforming}
At the BSs side, with the fixed $\Theta^{\left(\operatorname{t}\right)}$, and the obtained solutions of variables $\mathbf U^{\left(\operatorname{t+1}\right)}$ and $\mathbf Y^{\left(\operatorname{t+1}\right)}$, the optimization subproblem of the active transmit beamforming $\mathbf W$ is defined as
\begin{align}
    \mathbf W^{\left(\operatorname{t+1}\right)} \triangleq \left\{\arg\mathop{\max} _\mathbf W \; {f_4}\left( {\mathbf W,\Theta^{\left(\operatorname{t}\right)} , \mathbf Y^{\left(\operatorname{t+1}\right)} } \right),\quad \operatorname{s.t.}  \eqref{eq8}\right\},
\end{align}
By omitting the irrelevant constant term with respect to $\mathbf W$, i.e., $\sum_{k=1}^K{\operatorname{Tr}\left(\sigma_k^2{\mathbf{\bar U}_k}{\mathbf Y_k^{\operatorname{H}}}\mathbf Y_{k}\right)}$, the above problem can be equivalently transformed as
\begin{align}\label{eq21}
   \mathbf W^{\left(\operatorname{t+1}\right)} \triangleq \left\{\arg \mathop{\min} _\mathbf W \; {f_5}\left( \mathbf W\right),\quad
    \operatorname{s.t.} \; \eqref{eq8}\right\},
\end{align}
where
\begin{align}
    f_5\left(\mathbf W\right)
    &=\sum_{k=1}^K{\operatorname{Tr}\left({\mathbf{\bar U}_k}{\mathbf Y_k^{\operatorname{H}}}\sum_{l=1}^L\sum_{i=1}^K{\mathbf H_{l,k}^{\operatorname{H}}\mathbf W_{l,i}\mathbf W_{l,i}^{\operatorname{H}}\mathbf H_{l,k}}\mathbf Y_{k}\right)}\notag\\
    &-\sum_{k=1}^K{\operatorname{Tr}\left(\mathbf {\bar U}_k\mathbf Y_k^{\operatorname{H}} \sum_{l=1}^L{\mathbf H_{l,k}^{\operatorname{H}}\mathbf W_{l,k}}\right)}-\sum_{k=1}^K{\operatorname{Tr}\left(\mathbf {\bar U}_k \sum_{l=1}^L{\mathbf W_{l,k}^{\operatorname{H}}\mathbf H_{l,k}}\mathbf Y_k\right)}.
\end{align}
It can be verified that the objective function $f_5\left(\mathbf W\right)$ and the constraints \eqref{eq8} of Problem \ref{eq21} are both convex with respect to $\mathbf W$, therefore, Problem \ref{eq21} is a QCQP problem, which can be solved by employing convex solver tools, e.g., CVX \cite{cvx}. \textcolor{blue}{Instead of relying on the generic solver with high computational complexity, in the following, we propose a Lagrangian dual sub-gradient \cite{boyd2004convex} based algorithm to optimize $\mathbf W$.}

\textcolor{blue}{The Lagrangian dual function of Problem \ref{eq21} is defining as
\begin{align}
    \mathcal L\left(\mathbf W,\lambda\right)=f_5\left(\mathbf W\right)
    +\sum_{l=1}^L{\lambda_l\left(\sum_{i=1}^K{\operatorname{Tr}\left(\mathbf W_{l,i}^{\operatorname{H}}\mathbf W_{l,i}\right)}-P_{\max,l}\right)},
\end{align}
where $\mathbf \lambda _l \ge 0$ is the dual variable introduced for enforcing the maximal power constraint in $l$-th BS.  
The variables of $\mathbf W$ and $\lambda$ can be obtained alternately. The solutions of $\left\{\mathbf W, \lambda\right\}$ in $q$-th sub-iteration can be denoted by $\mathbf W^{\operatorname{q+1}}$ and $\lambda^{\operatorname{q+1}}$. 
First, with fixed $\lambda^{\operatorname{q}}$, the optimal active transmit beamforming can be optimized by solving the following problem 
\begin{align}
    \mathbf W^{\operatorname{q+1}}&=\arg\mathop{\min}_{\mathbf W} \mathcal L\left(\mathbf W,\lambda^{\operatorname{q}}\right),
\end{align}
By setting the first-order partial derivative of $\mathcal L\left(\mathbf W,\lambda^{\operatorname{q}}\right)$ with respect to $\mathbf W_{l,k},\forall \left\{l,k\right\}$ to be zeros, we have
\begin{align}\label{eq24}
    \mathbf W_{l,k}^{\operatorname{q+1}}\left(\lambda_l^{\operatorname{q}}\right)=\frac{\mathbf H_{l,k}\mathbf Y_k\mathbf {\bar U}_k}{\sum_{i=1}^K{\operatorname{Tr}\left(\mathbf {\bar U}_i \mathbf Y_i^{\operatorname{H}} \mathbf H_{l,i}^{\operatorname{H}}\mathbf H_{l,i}\mathbf Y_i\right)}+\lambda_l^{\operatorname{q}}\mathbf I_{M_u}}, \forall \left\{l,k\right\}.
\end{align}}
\textcolor{blue}{
The dual variable $\lambda$ can be determined by solving the following dual optimizing problem, which is given as
\begin{align}\label{eq25}
    \lambda^{\operatorname{q+1}}&=\arg\mathop{\max}_{\lambda_l\ge0,\forall l}  g\left(\lambda\right),
\end{align}
where 
\begin{align}
    g\left(\lambda\right)=\mathop{\min}_{\mathbf W} \mathcal L\left(\mathbf W^{\operatorname{q+1}},\lambda\right).
\end{align} 
By defining $f_l\left(\lambda_l^{\operatorname{q}}\right)=\sum_{k=1}^K{\operatorname{Tr}\left(\left(\mathbf {W}_{l,k}^{\operatorname{q+1}}\left(\lambda_l^{\operatorname{q}}\right)\right)^{\operatorname{H}}  \mathbf {W}_{l,k}^{\operatorname{q+1}}\left(\lambda_l^{\operatorname{q}}\right) \right)}-P_{\max,l}$, which is a monotonically decreasing function for $\lambda_l\ge0$, we propose a sub-gradient based method to update $\lambda_l^{\operatorname{q+1}}$. Particularly,  
with the fixed $\mathbf W_{l,k}^{\operatorname{q+1}}$, the dual variable $\lambda_l$ can be performed as follows
\begin{align}\label{eq27}
    \lambda_l^{\operatorname{q+1}}=\left[\lambda_l^{\operatorname{q}}+\tau_l f_l\left(\lambda_l^{\operatorname{q}}\right)\right]^+,\forall l.
\end{align}
where $\tau_l$ denotes the positive step size for updating $\lambda_l^{\operatorname{q+1}}$ and $\left[x\right]^+=\max\left\{x,0\right\}$.} The detailed information of the Lagrangian dual sub-gradient based algorithm for optimizing the active transmit beamforming are summarized in Algorithm \ref{a1}. 
\begin{algorithm}[t]
\caption{Lagrangian Dual Sub-gradient based Algorithm for Optimizing Problem \ref{eq21}} 
\label{a1} 
\begin{algorithmic}[1]
\REQUIRE dual variables $\left\{\lambda_l^{\operatorname{1}}\ge0,\forall l\right\}$, step size $\left\{\tau_l\ge0,\forall l\right\}$, threshold $\varepsilon_1$, $q=1$.
\STATE \textbf{Update} $\mathbf W_{l,k} ^{\operatorname{q+1}}, \forall \left\{l,k\right\}$ using \eqref{eq24};
\STATE \textbf{Update $\lambda_l^{\operatorname{q+1}},\forall l$} using \eqref{eq27};
\STATE \textbf{If} {${\left|\lambda_l^{\operatorname{q+1}} - \lambda_l^{\operatorname{q}} \right|}/{\lambda_l^{\operatorname{q+1}}} < \varepsilon_1 $}, output $\mathbf W_{l,k}^{\left(\operatorname{t+1}\right)}\triangleq  \mathbf W_{l,k}^{\operatorname{q}},\forall \{l,k\}$;\\\textbf{Otherwise}, set $q=q+1$ and go to step 1.
\end{algorithmic} 
\end{algorithm}

\subsection{Optimization of Passive Reflecting Beamforming}
Finally, we consider to optimize $\Theta$. With fixed $\mathbf U^{\left(\operatorname{t+1}\right)}$, $\mathbf W^{\left(\operatorname{t+1}\right)}$ and $\mathbf Y^{\left(\operatorname{t+1}\right)}$, the corresponding subproblem for optimizing the passive reflecting beamforming matrices of IRSs are given as
\begin{align}
    \Theta^{\left(\operatorname{t+1}\right)} \triangleq \left\{\arg\mathop{\max} _\Theta \;{f_4}\left( {\mathbf W^{\left(\operatorname{t+1}\right)},\Theta,\mathbf Y^{\left(\operatorname{t+1}\right)} } \right),\quad
    \operatorname{s.t.} \;\eqref{eq9}\right\}.
\end{align}
\textcolor{blue}{By defining $\Theta=\operatorname{diag}\left\{\Theta_1,\Theta_2,\cdots,\Theta_R\right\}\in \mathbb C^{RN\times RN}$, $\mathbf W_i=\left[\mathbf W_{1,i}^{\operatorname{T}},\mathbf W_{2,i}^{\operatorname{T}},\cdots,\mathbf W_{L,i}^{\operatorname{T}}\right]^{\operatorname{T}}\in \mathbb C^{L M_B\times M_U}$, $\mathbf D_{k}=\left[\mathbf D_{1,k}^{\operatorname{T}},\mathbf D_{2,k}^{\operatorname{T}},\cdots,\mathbf D_{L,k}^{\operatorname{T}}\right]^{\operatorname{T}}\in \mathbb C^{L M_B\times M_U}$, $\mathbf G_{k}=\left[\mathbf G_{1,k}^{\operatorname{T}},\mathbf G_{2,k}^{\operatorname{T}},\cdots,\mathbf G_{R,k}^{\operatorname{T}}\right]^{\operatorname{T}}\in \mathbb C^{RN\times M_U}$, $\mathbf S_{r}=\left[\mathbf S_{1,r},\mathbf S_{2,r},\cdots,\mathbf S_{L,r}\right]\in \mathbb C^{N\times L M_B}$, 
\textcolor{red}{\begin{align}\label{eq29}
    \mathbf S=\left[\begin{matrix}
    \mathbf S_{1,1}^{\operatorname{T}}&\mathbf S_{1,2}^{\operatorname{T}}&\cdots&\mathbf S_{1,R}^{\operatorname{T}}\\
    \mathbf S_{2,1}^{\operatorname{T}}&\mathbf S_{2,2}^{\operatorname{T}}&\cdots&\mathbf S_{2,R}^{\operatorname{T}}\\
    \vdots &\vdots&\vdots&\vdots\\
    \mathbf S_{L,1}^{\operatorname{T}}&\mathbf S_{L,2}^{\operatorname{T}}&\cdots&\mathbf S_{L,R}^{\operatorname{T}}
    \end{matrix}
    \right]^{\operatorname{T}}
    \in \mathbb C^{RN\times L M_B},
\end{align}}
we have
\begin{align}\label{eq30}
 \sum_{l=1}^L\sum_{i=1}^K{\mathbf H_{l,k}^{\operatorname{H}}\mathbf W_{l,i}\mathbf W_{l,i}^{\operatorname{H}}\mathbf H_{l,k}}
 =\mathbf D_{k}^{\operatorname{H}}\mathbf W \mathbf D_{k}+\mathbf D_{k}^{\operatorname{H}}\mathbf W \mathbf S^{\operatorname{H}} \Theta^{\operatorname{H}} \mathbf G_k 
 +\mathbf G_k^{\operatorname{H}}\Theta\mathbf S \mathbf W\mathbf D_k
 + \mathbf G_k^{\operatorname{H}}\Theta\mathbf S\mathbf W \mathbf S^{\operatorname{H}}\Theta^{\operatorname{H}}\mathbf G_k,  
\end{align}
where $\mathbf W=\sum_{i=1}^K{\mathbf W_i \mathbf W_i^{\operatorname{H}}}$, and we have 
\begin{align}\label{eq31}
    \sum_{l=1}^L{\mathbf H_{l,k}^{\operatorname{H}}\mathbf W_{l,k}}=\mathbf D_{k}^{\operatorname{H}}\mathbf W_{k}+\mathbf G_{k}^{\operatorname{H}}\Theta\mathbf S \mathbf W_{k}.
\end{align}}
By substituting \eqref{eq30} and \eqref{eq31} into ${f_4}\left( {\mathbf W^{\left(\operatorname{t+1}\right)},\Theta,\mathbf Y^{\left(\operatorname{t+1}\right)} } \right)$, and omitting irrelevant constants with respect to $\Theta$, i.e., $\sum_{k=1}^K{\operatorname{Tr}\left(\mathbf {\bar U}_k \mathbf Y_k^{\operatorname{H}}\mathbf D_{k}^{\operatorname{H}}\mathbf W \mathbf D_{k}\mathbf Y_k\right)}$, $\sum_{k=1}^K{\operatorname{Tr}\left(\mathbf {\bar U}_k \mathbf Y_k^{\operatorname{H}}\mathbf D_{k}^{\operatorname{H}}\mathbf W_k\right)}$, $\sum_{k=1}^K{\operatorname{Tr}\left(\mathbf {\bar U}_k\mathbf W_k^{\operatorname{H}}\mathbf D_{k}\mathbf Y_k\right)}$, and $\sum_{k=1}^K{\operatorname{Tr}\left(\sigma_k^2\mathbf {\bar U}_k\mathbf Y_k^{\operatorname{H}}\mathbf Y_k\right)}$, we have ${f_4}\left( {\mathbf W^{\left(\operatorname{t+1}\right)},\Theta,\mathbf Y^{\left(\operatorname{t+1}\right)} } \right)\triangleq f_6\left(\Theta\right)+\operatorname{Consts}\left(\Theta\right)$, where $f_6\left(\Theta\right)$ can be simplified expressed as follows
\begin{align}\label{eq32}
    f_6\left(\Theta\right)&=\sum_{k=1}^K{\operatorname{Tr}\left(\mathbf {\bar U}_k \mathbf Y_k^{\operatorname{H}} \mathbf G_k^{\operatorname{H}}\Theta\mathbf S\mathbf W_k\right)}+\sum_{k=1}^K{\operatorname{Tr}\left(\mathbf {\bar U}_k \mathbf W_k^{\operatorname{H}}\mathbf S^{\operatorname{H}}\Theta^{\operatorname{H}}\mathbf G_k \mathbf Y_k\right)}\notag\\
    &-\sum_{k=1}^K{\operatorname{Tr}\left(\mathbf {\bar U}_k \mathbf Y_k^{\operatorname{H}}\mathbf G_k^{\operatorname{H}}\Theta\mathbf S\mathbf W \mathbf S^{\operatorname{H}}\Theta^{\operatorname{H}}\mathbf G_k\mathbf Y_k\right)}-\sum_{k=1}^K{\operatorname{Tr}\left(\mathbf {\bar U}_k \mathbf Y_k^{\operatorname{H}}\mathbf G_k^{\operatorname{H}}\Theta\mathbf S \mathbf W\mathbf D_k\mathbf Y_k\right)}\notag\\
    &-\sum_{k=1}^K{\operatorname{Tr}\left(\mathbf {\bar U}_k \mathbf Y_k^{\operatorname{H}}\mathbf D_{k}^{\operatorname{H}}\mathbf W \mathbf S^{\operatorname{H}}\Theta^{\operatorname{H}} \mathbf G_k\mathbf Y_k\right)}.
\end{align}
Consequently, we have the following equivalent problem
\begin{align}\label{eq33}
    \mathop{\max}_\Theta \; f_6\left(\Theta\right),\quad
    \operatorname{s.t.} \;\eqref{eq9}.
\end{align}
To obtain the solution of $\Theta$, we transform Problem \ref{eq33} into an equivalent CMC-QP form. First, by defining $\mathbf Z_k = \mathbf G_k \mathbf Y_k \mathbf {\bar U}_k \mathbf Y_k^{\operatorname{H}} \mathbf G_k^{\operatorname{H}}$, $\mathbf Z =\sum_{k=1}^K{\mathbf Z_k}$, and $\mathbf Q= \mathbf S \mathbf W \mathbf S^{\operatorname{H}}$, we have 
\begin{align}
   \sum_{k=1}^K{\operatorname{Tr}\left(\mathbf {\bar U}_k \mathbf Y_k^{\operatorname{H}}\mathbf G_k^{\operatorname{H}}\Theta\mathbf S\mathbf W \mathbf S^{\operatorname{H}}\Theta^{\operatorname{H}}\mathbf G_k\mathbf Y_k\right)}=\operatorname{Tr}\left(\Theta^{\operatorname{H}}\mathbf Z \Theta \mathbf Q\right).
\end{align}
Then, by defining $\mathbf A_k =\mathbf G_k\mathbf Y_k\mathbf {\bar U}_k\mathbf Y_k^{\operatorname{H}}\mathbf D_k^{\operatorname{H}}\mathbf W\mathbf S^{\operatorname{H}}$, we have 
\begin{align}
    \sum_{k=1}^K{\operatorname{Tr}\left(\mathbf {\bar U}_k \mathbf Y_k^{\operatorname{H}}\mathbf D_{k}^{\operatorname{H}}\mathbf W \mathbf S^{\operatorname{H}}\Theta^{\operatorname{H}} \mathbf G_k\mathbf Y_k\right)}=\operatorname{Tr}\left(\Theta^H\mathbf A\right),\;
    \sum_{k=1}^K{\operatorname{Tr}\left(\mathbf {\bar U}_k \mathbf Y_k^{\operatorname{H}}\mathbf G_k^{\operatorname{H}}\Theta\mathbf S \mathbf W\mathbf D_k\mathbf Y_k\right)}=\operatorname{Tr}\left(\mathbf A^H\Theta\right).
\end{align}
Next, by defining $\mathbf E_k=\mathbf G_k\mathbf Y_k\mathbf {\bar U}_k\mathbf Y_k^{\operatorname{H}}\mathbf S^{\operatorname{H}}$, we have 
\begin{align}
    \sum_{k=1}^K{\operatorname{Tr}\left(\mathbf {\bar U}_k \mathbf W_k^{\operatorname{H}}\mathbf S^{\operatorname{H}}\Theta^{\operatorname{H}}\mathbf G_k \mathbf Y_k\right)}=\operatorname{Tr}\left(\Theta^H\mathbf E \right),\;\sum_{k=1}^K{\operatorname{Tr}\left(\mathbf {\bar U}_k \mathbf Y_k^{\operatorname{H}} \mathbf G_k^{\operatorname{H}}\Theta\mathbf S\mathbf W_k\right)}=\operatorname{Tr}\left(\mathbf E ^H\Theta\right).
\end{align}
As such, we can equivalently transform the objective function $f_6\left(\Theta\right)$ as 
\begin{align}
    f_6\left(\Theta\right)=\operatorname{Tr}\left(\Theta^{\operatorname{H}}\left(\mathbf E - \mathbf A\right) \right)
    +\operatorname{Tr}\left(\left(\mathbf E - \mathbf A\right)^{\operatorname{H}} \Theta\right)
    -\operatorname{Tr}\left(\Theta^{\operatorname{H}}\mathbf Z \Theta \mathbf Q\right).
\end{align}
Additionally, we have a sequence of equations \cite{2017Matrix} as follows
\begin{align}
    \operatorname{Tr}\left(\Theta^{\operatorname{H}}\mathbf Z \Theta \mathbf Q\right)=\theta^{\operatorname{H}}\mathcal Z\theta, \operatorname{Tr}\left(\Theta^{\operatorname{H}}\Omega \right)= \theta^{\operatorname{H}}\omega,
\end{align}
where $\mathcal Z \triangleq \mathbf Z \odot \mathbf Q^{\operatorname{T}}\in \mathbb C^{\mathcal N\times \mathcal N}$, $\Omega=\mathbf E - \mathbf A\in \mathbb C^{\mathcal N\times \mathcal N}$ with $\mathcal N=RN$, and
\begin{align}\label{eq39}
    \theta=\left(\Theta_{1,1},\cdots, \Theta_{1,N},\Theta_{2,1},\cdots,\Theta_{R,N}\right)^{\operatorname{T}}\in \mathbb C^{\mathcal N\times 1},\notag\\
    \omega=\left[\Omega_{1,1},\cdots,\Omega_{N,N},\Omega_{N+1,N+1},\cdots,\Omega_{\mathcal N,\mathcal N}\right]^{\operatorname{T}}\in \mathbb C^{\mathcal N\times 1}.
\end{align}
 Based on the aforementioned definitions, we can transform Problem \ref{eq33} equivalently as follows 
\begin{align}\label{eq511}
    \mathop{\max}_\theta &\quad f_7\left(\theta\right)=-\theta^{\operatorname{H}}\mathcal Z\theta+\theta^{\operatorname{H}}\omega+\omega^{\operatorname{H}}\theta\\
    \operatorname{s.t.} &\quad \left|\theta_{i}\right|={\alpha},i =1,2,\cdots, \mathcal N,\tag{46a}\label{eq46a}\notag
\end{align}
 Note that the above CMC-QP problem is non-convex due to the non-convexity of the constant modulus constraints \eqref{eq46a}. As follows, we provide a pair of relaxation methods to solve the non-convex CMC-QP problem.

\subsubsection{SDR-based algorithm}\textcolor{blue}{
By defining $\mathcal {\bar N} ={\mathcal N}+1$, we reformulate the above problem as
\begin{align}\label{eq41}
    \mathop{\max}_\theta &\quad \hat\theta^{\operatorname{H}}\bar{\mathcal { Z}}\hat\theta\\
    \operatorname{s.t.} &\quad\left|\hat\theta_{i}\right|={\alpha}, i =1,2,\cdots, \mathcal N,\tag{47a}
\end{align}
where $\hat\theta=\left[\theta^{\operatorname{T}},\alpha\right]^{\operatorname{T}}\in \mathbb C^{\mathcal {\bar N}\times 1}$, and 
\begin{align}
\bar{\mathcal { Z}}=\left[
\begin{matrix}
-\mathcal { Z}&\omega\\
\omega^{\operatorname{H}}&0
\end{matrix}
\right]\in \mathbb C^{\mathcal {\bar N}\times \mathcal {\bar N}}.
\end{align}
The above problem can be equivalently reformulated as 
\begin{align}\label{eq43}
     \mathop{\max}_{\hat\theta} &\quad \hat\theta^{\operatorname{H}}\left(\lambda_{\mathcal{\bar Z}}^{\max}\mathbf I_{\mathcal {\bar N}}-\bar{\mathcal { Z}}\right)\hat\theta\\
    \operatorname{s.t.} &\quad\left|\hat\theta_{i}\right|={\alpha}, i =1,2,\cdots, \mathcal {\bar N}, \tag{49a}\label{eq43a}
\end{align}
where $\lambda_{\mathcal {\bar Z}}^{\max}$ is the maximal eigenvalue of $\bar{\mathcal { Z}}$. Since $\hat\theta^{\operatorname{H}}\hat\theta={\alpha^2}{\mathcal{{\bar N}}}$ holds, we have $\hat\theta^{\operatorname{H}}\left(\lambda_{\mathcal {\bar Z}}^{\max}\mathbf I_{\mathcal {\bar N}}\right)\hat\theta=\alpha^2\lambda_{\mathcal {\bar Z}}^{\max}\mathcal{{\bar N}}$, which has no impact on updating $\hat \theta$. Therefore, Problem \ref{eq41} and \ref{eq43} are equivalent.
Let $\hat{\mathcal { Z}}=\lambda_{\mathcal {\bar Z}}^{\max}\mathbf I_{\mathcal {\bar N}}-\bar{\mathcal { Z}}$, and it can be readily checked that $\hat{\mathcal Z}$ is a semi-definite matrix. Therefore, the problem can be solved by exploiting the SDR related techniques.
By defining a new variable matrix $\mathcal V=\hat\theta\hat\theta^{\operatorname{H}} \in \mathbb C^{\mathcal {\bar N}\times \mathcal {\bar N}}$, which satisfies $\mathcal V \succeq 0$ and $\operatorname{Rank}\left(\mathcal V\right)=1$. By dropping the non-convex rank-one constraint, we have the following convex semi-definite programming (SDP) problem 
\begin{align}\label{eqsdr}
     \mathop{\max}_{\mathcal V} &\quad \operatorname{Tr}\left(\hat{\mathcal{Z}}\mathcal V\right)\\
    \operatorname{s.t.} &\quad\mathcal V_{i,i}={\alpha^2}, i =1,2,\cdots, \mathcal {\bar N},\tag{50a}\\
    &\quad\mathcal V \succeq 0.\tag{50b}
\end{align}
The above problem can be solved by exploiting the convex solver tools, e.g., CVX \cite{cvx}.
Since the returned solution, $\mathcal V$ fails to be guaranteed rank-one, thus need to adopt the Gaussian randomization approach \cite{1634819} to obtain locally optimal solutions.  
Expressing the eigenvalue decomposition of $\mathcal V$ as $\mathcal V=\hat{\mathbf U}_{\mathcal V}\Sigma_{\mathcal V}\hat{\mathbf U}_{\mathcal V}^{\operatorname{H}}$, we then set $\varpi=\hat{\mathbf U}_{\mathcal V}\Sigma_{\mathcal V}^{\frac{1}{2}}\zeta\in \mathbb C^{\mathcal {\bar N} \times 1}$, where the vector $\zeta \sim \mathcal C \mathcal N \left(\mathbf 0, \mathbf I_{\mathcal {\bar N}}\right)$ is selected for maximizing $\varpi^{\operatorname{H}}\left(\lambda_{\mathcal{\bar Z}}^{\max}\mathbf I_{\mathcal {\bar N}}-\bar{\mathcal { Z}}\right)\varpi$.
Thus, we obtain the locally optimal solution of Problem \ref{eq511} as $\theta=\alpha e^{j\arg\left(\varpi_{\left[1:\mathcal N\right]}/\varpi_{\mathcal {\bar N}}\right)}$, which satisfies the constant modulus constraints.}

\subsubsection{QCR-based algorithm}\textcolor{blue}{Note that the objective function of Problem \ref{eq511} is convex with regard to $\theta$, and the non-convexity of the problem is caused by the constant modulus constraints \eqref{eq46a}. We consider relaxing the constraints as convex forms, i.e., $\left|\theta_i\right|=\alpha  \to \theta^{\operatorname{H}} \mathbf e_i \mathbf e_i^{\operatorname{H}} \theta \le \alpha, \forall i \in \mathcal N$.
Thus, Problem \ref{eq511} is relaxed as a standard convex QCQP problem as follows
\begin{align}\label{eq51aa}
    \mathop{\max}_\theta &\quad f_7\left(\theta\right)\\
    \operatorname{s.t.} &\quad \theta^{\operatorname{H}} \mathbf e_i \mathbf e_i^{\operatorname{H}} \theta \le \alpha, \forall i \in \mathcal N, \tag{51a}\notag
\end{align}
which can be solved by relying on the generic convex solver, e.g. CVX, or employing the Lagrangian sub-gradient dual based algorithm with the locally optimal solutions in nearly closed-forms. Then, we have the following lemma
\begin{lemma}
The relaxed solution generated by the QCR-based algorithm satisfies the Karush-Kuhn-Tucker (KKT) conditions of Problem \ref{eq511}.
\end{lemma}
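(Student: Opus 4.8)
The plan is to exploit the convexity of the QCR problem \eqref{eq51aa}: the point returned by the Lagrangian sub-gradient dual procedure is a global optimum and hence satisfies the KKT system of \eqref{eq51aa}; I then show this system coincides with the KKT system of the constant-modulus problem \eqref{eq511}, so the relaxed solution is in fact a KKT point (indeed a global optimum) of \eqref{eq511}.

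First I would verify that \eqref{eq51aa} is a convex program: the objective $f_7(\theta)=-\theta^{\operatorname{H}}\mathcal Z\theta+\theta^{\operatorname{H}}\omega+\omega^{\operatorname{H}}\theta$ is concave because $\mathcal Z=\mathbf Z\odot\mathbf Q^{\operatorname{T}}$ is positive semi-definite --- both $\mathbf Z=\sum_{k}\mathbf G_k\mathbf Y_k\bar{\mathbf U}_k\mathbf Y_k^{\operatorname{H}}\mathbf G_k^{\operatorname{H}}$ and $\mathbf Q=\mathbf S\mathbf W\mathbf S^{\operatorname{H}}$ are PSD, and the Hadamard product of PSD matrices is PSD --- while the constraints $\theta^{\operatorname{H}}\mathbf e_i\mathbf e_i^{\operatorname{H}}\theta\le\alpha$ are convex and $\theta=\mathbf 0$ is strictly feasible (Slater). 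Hence the KKT conditions are necessary and sufficient for \eqref{eq51aa}, and the dual sub-gradient iteration converges to a primal--dual optimum $(\theta^{\star},\nu^{\star})$ which, by Wirtinger calculus, satisfies (i) stationarity $\mathcal Z\theta^{\star}-\omega+\operatorname{diag}(\nu^{\star})\theta^{\star}=\mathbf 0$, i.e. $\theta^{\star}=(\mathcal Z+\operatorname{diag}(\nu^{\star}))^{-1}\omega$; (ii) primal feasibility $|\theta_i^{\star}|\le\alpha$; (iii) dual feasibility $\nu_i^{\star}\ge 0$; and (iv) complementary slackness $\nu_i^{\star}(|\theta_i^{\star}|^{2}-\alpha^{2})=0$, for all $i$.

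Next I would write the KKT system of the original problem \eqref{eq511}. Since \eqref{eq511} carries only the equality constraints $|\theta_i|=\alpha$, LICQ holds at every feasible point --- the constraint gradients $\partial(|\theta_i|^{2})/\partial\theta^{\star}=\theta_i\mathbf e_i$ are supported on distinct coordinates and nonzero since $\alpha>0$ --- so a feasible $\theta$ is a KKT point of \eqref{eq511} iff there exists a \emph{sign-free} multiplier $\mu\in\mathbb R^{\mathcal N}$ with $\mathcal Z\theta-\omega+\operatorname{diag}(\mu)\theta=\mathbf 0$ (complementary slackness being vacuous on the feasible set). Comparing with (i)--(iv): once every per-element constraint of \eqref{eq51aa} is active at $\theta^{\star}$, i.e. $|\theta_i^{\star}|=\alpha$ for all $i$, then $\theta^{\star}$ is feasible for \eqref{eq511}, and choosing $\mu=\nu^{\star}$ turns stationarity (i) into exactly the stationarity condition of \eqref{eq511}; the nonnegativity $\nu^{\star}\ge\mathbf 0$ is no obstruction because $\mathbb R_{+}^{\mathcal N}\subset\mathbb R^{\mathcal N}$. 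Hence $(\theta^{\star},\nu^{\star})$ is a KKT point of \eqref{eq511}, equivalently a globally optimal point of \eqref{eq511}, since the relaxed feasible set contains the original one and the two objectives agree.

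The main obstacle is precisely the tightness claim $|\theta_i^{\star}|=\alpha$ for all $i$. I would establish it from the fixed-point form $\theta^{\star}=(\mathcal Z+\operatorname{diag}(\nu^{\star}))^{-1}\omega$: a slack coordinate $|\theta_j^{\star}|<\alpha$ forces $\nu_j^{\star}=0$ by (iv), and I would show that such a coordinate admits a feasible first-order ascent direction for $f_7$ in the $j$-th complex variable, contradicting global optimality; this is where the concrete structure of $\mathcal Z$ and $\omega$ (built from $\mathbf G_k,\mathbf S,\mathbf W,\mathbf Y_k,\bar{\mathbf U}_k$) enters, and a careful perturbation argument should close it. Should a constraint remain genuinely inactive in a degenerate instance, the fallback is to rescale $\theta_i^{\star}\mapsto\alpha\, e^{j\arg(\theta_i^{\star})}$ and re-solve the (decoupled) dual, which I expect to preserve the stationarity identity; the rest of the argument is routine Wirtinger bookkeeping.
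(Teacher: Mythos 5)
Your overall route is the same as the paper's: both arguments rest on identifying the KKT system of the relaxed problem \eqref{eq51aa} with that of the constant-modulus problem \eqref{eq511} (the paper states this identification in a single sentence and omits the details). The difference is that you make explicit the hypothesis this identification actually requires, namely that every relaxed constraint is active at the returned solution, $|\theta_i^\star|=\alpha$ for all $i$, and this is precisely the step your proposal does not close. Without it the relaxed optimum need not even be feasible for \eqref{eq511}, and an infeasible point cannot satisfy its KKT conditions, so everything hinges on that activity claim.

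The two devices you offer to close it both fail. The ascent-direction argument is circular for this problem: as you yourself note, $\mathcal Z\succeq 0$, so $f_7$ is concave and its maximizer over the relaxed set can legitimately sit strictly inside a slack coordinate; at a global maximizer there is by definition no feasible ascent direction, so slackness produces no contradiction. Concretely, the per-coordinate problem has the form $\max_{|\theta_j|\le\alpha}\,\bigl(-z_{j,j}|\theta_j|^2+2\operatorname{Re}\{\theta_j^{\ast}\mu_j\}\bigr)$ (the same structure the ASO algorithm exploits), and whenever $|\mu_j|<\alpha z_{j,j}$ the relaxed optimum is the interior point $\theta_j=\mu_j/z_{j,j}$ with zero multiplier, which violates $|\theta_j|=\alpha$; nothing in the construction of $\mathcal Z$ and $\omega$ from $\mathbf G_k,\mathbf S,\mathbf W,\mathbf Y_k,\bar{\mathbf U}_k$ rules this out. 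Your fallback of rescaling $\theta_i^\star\mapsto\alpha e^{j\arg(\theta_i^\star)}$ does not repair it either, since the radial projection generally destroys the stationarity identity you need to carry over. So the proposal has a genuine gap; it is only fair to add that the paper's own proof never addresses this activity condition at all---it simply asserts that the two KKT systems coincide---so what you uncovered is a real subtlety that the original argument glosses over rather than a detail you could have lifted from it.
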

\begin{proof}
 Based on the fact that the KKT conditions of Problem \ref{eq51aa} constitute exactly that of Problem \ref{eq511}, the lemma can be easily proved. Hence, it is omitted for simplicity.
\end{proof}}

However, it is worth pointing out SDR and QCR suffer from high computational complexities (i.e., $\mathcal O\left(\mathcal {\bar N}+1+\mathcal {\bar N}^2\right)^{3.5}$ for SDR \cite{9279253} and $\mathcal O\left(\mathcal {N}^6\right)$ for QCR \cite{8982186}), especially for the large-scale IRS. 
To this end, we propose a computationally efficient ASO algorithm \cite{7946256} to design the passive reflecting beamforming in the next section.


\section{Computational Efficient ASO algorithm }
\textcolor{blue}{Note that $f_7\left(\theta\right)$ and the constant modulus constraints \eqref{eq46a} are separable with respect to $\theta_i,\forall i \in \mathcal N$ \cite{7946256}, therefore, we can decompose Problem \ref{eq511} into $\mathcal N$ separate subproblems and solve them one-by-one. Particularly, we have 
\begin{align}
    \theta^{\operatorname{H}}\omega=\sum_{n=1}^{\mathcal N}{\theta_n^\ast\omega_n}=\theta_i^\ast\omega_i +\sum_{n=1,n\ne i}^{\mathcal N}{\theta_n^{\ast}\omega_n}.
\end{align}
Meanwhile, $\theta^{\operatorname{H}} {\mathcal { Z}} \theta$ can be expanded as
\begin{align}\label{eq44}
     \theta^{\operatorname{H}} {\mathcal { Z}} \theta&=\sum_{n=1\atop n\ne i}^{\mathcal N}{ \theta^{\operatorname{H}}\mathbf z _n \theta_n}+{ \theta^{\operatorname{H}}\mathbf z _i \theta_i}=\sum_{n=1\atop n\ne i}^{\mathcal N}{ \theta_i^{\ast} z _{i,n} \theta_n}+{ \theta^{\operatorname{H}}\mathbf z _i \theta_i}+\sum_{m=1\atop m\ne n}^{\mathcal N}{\sum_{p=1\atop p\ne i}^{\mathcal N}{ \theta_m^{\ast} z _{m,p} \theta_p}}\notag\\
    &= \theta_i^{\ast} z _{i,i} \theta_i+\sum_{n=1\atop n\ne i}^{\mathcal N}{\left( \theta_i^{\ast} z _{i,n} \theta_n+ \theta_i z _{n,i}\theta_n^{\ast}\right)}+\sum_{m=1\atop m\ne n}^{\mathcal N}{\sum_{p=1\atop p\ne i}^{\mathcal N}{ \theta_m^{\ast} z _{m,p} \theta_p}},
\end{align}
where $ {\mathcal Z} = \left[\mathbf z_1,\mathbf z_2,\cdots,\mathbf z_{\mathcal N}\right]$ and $\mathbf z_n =\left[z_{1,n}, z_{2,n},\cdots,z_{\mathcal N,n}\right]^{\operatorname{T}}\in \mathbb C^{\mathcal{N}\times 1}$. By using the property $z_{i,n}=z_{n,i}^{\ast}$ and basing the fact that $ {\mathcal { Z}}$ is a positive semi-definite matrix, we have
\begin{align}
     &-\theta^{\operatorname{H}} {\mathcal { Z}} \theta
    +\theta^{\operatorname{H}}\omega+\omega^{\operatorname{H}}\theta\notag\\
    &= 2\operatorname{Re}\left\{\theta_i^{\ast}\omega_i+\sum_{n=1,n\ne i}^{\mathcal N}{\theta_n^{\ast}\omega_n}\right\} -\theta_i^{\ast} z _{i,i} \theta_i-2\operatorname{Re}\left\{\sum_{n=1\atop n\ne i}^{\mathcal N}{ \theta_i^{\ast} z _{i,n} \theta_n}\right\}-\sum_{m=1\atop m\ne n}^{\mathcal N}{\sum_{p=1\atop p\ne i}^{\mathcal N}{ \theta_m^{\ast} z _{m,p} \theta_p}}.
\end{align}
Therefore, $f_7\left(\theta\right)$ can be reformulated as
\begin{align}
    f_8\left(\theta\right)=2\operatorname{Re}\left\{\theta_i^{\ast}\mu_i\right\}+\xi, 
\end{align}
where 
\begin{align}
    \mu_{i}=\omega_i - \sum_{n=1\atop n\ne i}^{\mathcal N}{ z _{i,n} \theta_n},\quad
    \xi=2\operatorname{Re}\left\{\sum_{n=1,n\ne i}^{\mathcal N}{\theta_n^{\ast}\omega_n}\right\}-\sum_{m=1\atop m\ne n}^{\mathcal N}{\sum_{p=1\atop p\ne i}^{\mathcal N}{ \theta_m^{\ast} z _{m,p} \theta_p}}-\theta_i^{\ast} z _{i,i} \theta_i,
\end{align}
where $\xi$ is the irreverent constant term with regard to $\theta_i$ (e.g., $\theta_i^{\ast} z _{i,i} \theta_i=z _{i,i}\left| \theta_i\right|^2={\alpha^2}{z _{i,i}}$), which do not affect the optimal value.
Therefore, we can only investigate $\operatorname{Re}\left\{ \theta_i^{\ast}\mu_i\right\}$ for optimizing $ \theta_i$ and sequentially optimize each element while fixing the remaining $\mathcal N-1$ elements. Problem \ref{eq511} can be equivalently transformed as
\begin{align}\label{eq49}
    \mathop{\max}_{ \theta_i}\quad& \operatorname{Re}\left\{ \theta_i^{\ast}\mu_i\right\}\\
    \operatorname{s.t.}\quad&\left| \theta_i\right|={\alpha},\notag
\end{align}}

\textcolor{blue}{An equivalent expression for Problem \ref{eq49} is given by
\begin{align}\label{eq50}
    \mathop{\max}_{ \phi_i}\quad& \cos\left(- \phi_i+\eta_i\right)\\
    \operatorname{s.t.}\quad& \phi_i\in\left[0,2\pi\right],\tag{58a}
\end{align}
where $\eta_i$ and $- \phi_i$ are the phases of $\mu_i$ and $ \theta_i^{\ast}$, respectively. Consequently, Problem \ref{eq50} has a closed-form optimal solution whose phase is given by
\begin{align}
     \phi_i=\eta_i, \forall i \in \mathcal N.
\end{align}
Consequently, we have 
\begin{align}\label{eq52}
     \theta_i={\alpha}e^{j\eta_i}, \forall i \in \mathcal N.
\end{align}
Based on the above discussions, the procedure of sequentially optimizing $ \theta_1, \theta_2,\cdots, \theta_{\mathcal N}$ and then repeatedly until convergence is attained.} The details for optimizing the locally optimal passive reflecting beamforming are summarized in Algorithm \ref{a2}, and we have the following lemma
\begin{lemma}\label{lemmaconver}
Algorithm \ref{a2} is guaranteed to converge.
\end{lemma}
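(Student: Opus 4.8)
The plan is to invoke the classical monotone-convergence argument for sequential (block coordinate) maximization. First I would observe that at each inner step Algorithm \ref{a2} replaces $\theta_i$ by the exact maximizer $\theta_i = \alpha e^{j\eta_i}$ of the one-dimensional subproblem \eqref{eq49} (equivalently \eqref{eq50}), while all other entries of $\theta$ are held fixed. Since the residual term $\xi$ in $f_8(\theta) = 2\operatorname{Re}\{\theta_i^{\ast}\mu_i\} + \xi$ does not depend on $\theta_i$, this update can only increase, or at worst leave unchanged, the value of $f_7(\theta)$. Sweeping $i = 1, 2, \ldots, \mathcal N$ once therefore yields $f_7(\theta^{(\operatorname{t+1})}) \ge f_7(\theta^{(\operatorname{t})})$, so the sequence of objective values generated across iterations is monotonically non-decreasing.

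Next I would argue that $f_7(\theta)$ is bounded above on the feasible set. The constant modulus constraints $|\theta_i| = \alpha$ confine $\theta$ to a compact set (a product of $\mathcal N$ circles of radius $\alpha$), and $f_7(\theta) = -\theta^{\operatorname{H}}\mathcal Z\theta + \theta^{\operatorname{H}}\omega + \omega^{\operatorname{H}}\theta$ is continuous; hence it attains a finite maximum on this set, which is a uniform upper bound for all iterates. A monotonically non-decreasing real sequence that is bounded above converges, so the objective-value sequence produced by Algorithm \ref{a2} converges to a finite limit. This establishes the claimed convergence.

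I would close by noting what this argument does and does not deliver: it guarantees convergence of the objective sequence, and because each per-block problem \eqref{eq50} has an essentially unique optimal phase whenever $\mu_i \neq 0$, one can further appeal to standard coordinate-maximization theory to conclude that any limit point of $\{\theta^{(\operatorname{t})}\}$ is a coordinatewise maximizer and hence satisfies the first-order optimality conditions of Problem \eqref{eq511}. The hard part will be precisely this last refinement — controlling degenerate steps where $\mu_i = 0$ and guaranteeing convergence of the iterate sequence itself rather than merely its objective values — but for the statement as given, the monotone-plus-bounded argument is sufficient, and that is the portion I would write out in full.
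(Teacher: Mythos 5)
Your argument is correct and follows essentially the same route as the paper's proof: monotonicity of $f_7$ from exact per-coordinate maximization over the sweep $i=1,\dots,\mathcal N$, combined with an upper bound on the objective, yields convergence of the objective sequence. The only cosmetic difference is that you obtain the bound via compactness of the constant-modulus set and continuity of $f_7$, whereas the paper bounds the quadratic and linear terms explicitly through $\alpha^2\mathcal N\lambda_{\mathcal Z}^{\max}$ and $\alpha\sum_{n}\left|\omega_n\right|$; both are equally valid.
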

\begin{proof}
The proof is presented in Appendix \ref{lemma2proof}
\end{proof}

Finally, by defining $\hat N=\left(r-1\right)N$, the locally optimal passive reflecting beamforming of the $r$-th IRS in $t$-th iterations is given as
\begin{align}
    \Theta_r^{\operatorname{\left(t+1\right)}}=\operatorname{diag}\left\{ \theta_{  \hat N+1}, \theta_{ \hat  N+2},\cdots, \theta_{  \hat N +N}\right\},\forall r \in R.
\end{align}

\begin{algorithm}[t]
\caption{ASO Algorithm for Optimizing the Passive Reflecting Beamforming} 
\label{a2} 
\begin{algorithmic}[1]
\REQUIRE $ {\mathcal{Z}}$, threshold $\varepsilon_2$, $u=1$,  $\rho^{\operatorname{u}}=f_7\left(\theta^{\operatorname{u}}\right)$.
\STATE \textbf{Update}  $ \theta_i^{\operatorname{u+1}}$ using \eqref{eq52};
\STATE \textbf{Update}  $\rho^{\operatorname{u+1}}$ by substituting $ \theta^{\operatorname{u+1}}$ into $f_7$;
\STATE \textbf{If} $\left|\rho^{\operatorname{u+1}}-\rho^{\operatorname{u}}\right|\le\varepsilon_2$, output $ \theta^{\left(\operatorname{t+1}\right)}= \theta^{\operatorname{u}}$;\quad \textbf{Otherwise}, set $\operatorname{u}=\operatorname{u+1}$ and go to Step 1.
\end{algorithmic} 
\end{algorithm}

\subsection{Extend to Discrete Phase Shift Case}\label{discretecase}
Since the hardware limitations, the phase shifts of IRS can not be continuous, we consider a more practical IRS model, i.e., the discrete phase shift case, where the phase of each element can only take finite discrete value from the discrete phase set and can be implemented by exploiting the PIN diodes technique. The discrete phase shifts are represented by 
\begin{align}
    \Theta_{r,n}^{\operatorname{\mathcal D}}\in \alpha\left\{1,{e^{j2\pi \frac{1}{\mathcal{M}}}},{e^{j2\pi \frac{2}{\mathcal{M}}}},\cdots, {e^{j2\pi \frac{\mathcal{M}-1}{\mathcal{M}}}}\right\},\forall r \in R, \forall n \in N,
\end{align}
where $\mathcal M$ denotes the size of the discrete phase set.

\textcolor{blue}{Note that Algorithm \ref{a2} can solve the discrete phase shift case after minor changes. Thus, we have the following problem, which is similar to Problem \ref{eq50} 
\begin{align}\label{discrete}
    \mathop{\max}_{\hat\phi_i^{\operatorname{\mathcal D}}}\quad& \cos\left(-\hat\phi_i^{\operatorname{\mathcal D}}+\eta_i^{\operatorname{\mathcal D}}\right)\notag\\
    \operatorname{s.t.}\quad&\hat\phi_i^{\operatorname{\mathcal D}}\in\frac{2\pi}{\mathcal{M}}\left\{0,1,\cdots,\mathcal{M}-1\right\}, {i =1,2,\cdots, \mathcal N}.
\end{align}
Inspection of Problem \ref{discrete} reveals that it can be solved by employing an exhaustive search (ES) strategy. The computational complexity is associated with the number of phase shifts and the size of the discrete phase set, i.e., on the order of $\mathcal O\left(\mathcal{M}{\mathcal{N}}^2\right)$.}

\subsection{Analysis of the Overall Algorithm}
\begin{algorithm}[t]
\caption{ASO-based Joint Optimization Algorithm for Solving Problem \ref{eq7}} 
\label{a3} 
\begin{algorithmic}[1]
\REQUIRE    
    $\mathbf W_{l,k} ^{\left ( \operatorname{1} \right )},\forall \left\{l,k\right\}$,
    $\Theta ^{\left ( \operatorname{1} \right )}$;
    threshold $\varepsilon_3 $.
\ENSURE $\mathbf W_{l,k}^{\operatorname{opt}} \triangleq \mathbf W_{l,k}  ^{\left ( \operatorname{t} \right )},\forall {l,k}$; $\Theta_r^{\operatorname{opt}} \triangleq \Theta_r  ^{\left (  \operatorname{t} \right )},\forall r \in R$.\\
\STATE\textbf{Calculate} $\Gamma_{k} ^{\left ( 1 \right )}, \forall k$,\; $\mathcal{R}^{\left( 1 \right )}\left(\mathbf W^{\left ( 1 \right )},\Theta^{\left ( 1 \right )}\right)$;\\
\FOR{$t=1 \;\text{to}\; 2,3,\cdots,t_{\max}$ }
\STATE \textbf{Update} $\mathbf U_{k} ^{\left ( \operatorname{t+1} \right )}, \forall k$, by using \eqref{eq15};
\STATE \textbf{Update} $\mathbf Y_{k} ^{\left ( \operatorname{t+1} \right )}, \forall k$, by using \eqref{eq18};
\STATE \textbf{Update} $\mathbf W_{l,k} ^{\left ( \operatorname{t+1} \right )}, \forall \left\{l,k\right\}$, by employing Algorithm \ref{a1};
\STATE \textbf{Update} $\Theta_r ^{\left ( \operatorname{t+1} \right )},\forall r$, by employing Algorithm \ref{a2};
 \IF {${{\left| {{\mathcal{R}^{\left( \operatorname{t+1} \right)}}\left( {\mathbf W,\Theta } \right) - {\mathcal{R}^{\left( \operatorname{t+1}\right)}}\left( {\mathbf W,\Theta } \right)} \right|}}/{{{\mathcal{R}^{\left( \operatorname{t+1} \right)}}\left( {\mathbf W,\Theta } \right)}} < \varepsilon_3 $} 
\item Break;
 \ENDIF
 \ENDFOR
\end{algorithmic} 
\end{algorithm}
The details of the proposed ASO-based joint optimization algorithm are provided in Algorithm \ref{a3}. 
As follows, we prove the convergence of the ASO-based algorithm. Particularly, according to the properties of LDT and QT methods, we have 
\begin{align}
    \mathcal{R}\left( {{\mathbf W},{\Theta }} \right) = {f_1}\left( {{\mathbf W},{\Theta},{\mathbf U}} \right)={f_3}\left( {{\mathbf W},{\Theta },{\mathbf U}, \mathbf Y} \right).
\end{align}
In $t${-th} iteration, the update rule is designed as 
\begin{align}
   \cdots \mathbf U^{\left(\operatorname{t+1}\right)} \to \mathbf Y^{\left(\operatorname{t+1}\right)} \to \mathbf W^{\left(\operatorname{t+1}\right)} \to \Theta^{\left(\operatorname{t+1}\right)}  \cdots, 
\end{align}
and we have the following inequalities
\begin{align}
    &{f_3}\left( { {\mathbf W^{\left(\operatorname{t}\right)}},{\Theta^{\left(\operatorname{t}\right)} },{\mathbf U}^{\left(\operatorname{t}\right)}, \mathbf Y^{\left(\operatorname{t}\right)}} \right) \notag\\
    \mathop  \le^{\left( a \right)} & {f_3}\left( { {\mathbf W^{\left(\operatorname{t}\right)}},{\Theta^{\left(\operatorname{t}\right)} },{\mathbf U}^{\left(\operatorname{t+1}\right)}, \mathbf Y^{\left(\operatorname{t}\right)}} \right)\notag\\
    \triangleq&{f_4}\left( {{\mathbf W^{\left(\operatorname{t}\right)}},{\Theta^{\left(\operatorname{t}\right)} }, \mathbf Y^{\left(\operatorname{t}\right)}} \right) + \operatorname{Const}\left(\mathbf {\bar U}\right)\notag\\
    \mathop  \le^{\left( b \right)}&{f_4}\left( { {\mathbf W^{\left(\operatorname{t}\right)}},{\Theta^{\left(\operatorname{t}\right)} }, \mathbf Y^{\left(\operatorname{t+1}\right)}} \right) + \operatorname{Const}\left(\mathbf {\bar U}\right)\notag\\
    \triangleq&-{f_5}\left({\mathbf W^{\left(\operatorname{t}\right)}} \right) + \operatorname{Const}\left(\mathbf {Y},\mathbf {\bar U} \right)\notag\\
    \mathop  \le^{\left( c \right)}& -{f_5}\left({\mathbf W^{\left(\operatorname{t+1}\right)}} \right) + \operatorname{Const}\left( \mathbf {Y},\mathbf {\bar U} \right),
\end{align}
where the inequalities, i.e, (a), (b) and (c) hold due to the fact that $\mathbf U$, $\mathbf Y$ and $\mathbf W$ are optimally determined by employing \eqref{eq15}, \eqref{eq18}, and \eqref{eq24}, respectively. Meanwhile, we have 
\begin{align}
    {f_3}\left( {{\mathbf W^{\left(t+1\right)}},{\Theta^{\left(t\right)} },{\mathbf U}^{\left(t+1\right)}, \mathbf Y^{\left(t+1\right)}} \right) 
    \triangleq {f_6}\left({ \Theta^{\left(t\right)}} \right) + \operatorname{Const}\left( \mathbf W,\mathbf {\bar U},\mathbf {Y} \right),
\end{align}
and after some matrix manipulations, we have
\begin{align}
    {f_6}\left({ \Theta^{\left(t\right)}}\right)\triangleq f_7\left(\theta^{\left(t\right)}\right).
\end{align}
According to the property of Algorithm \ref{a2}, we have 
\begin{align}
    f_7\left(\theta^{\left(t\right)}\right) \le f_7\left(\theta^{\left(t+1\right)}\right).
\end{align}
Above inequalities verify that $\mathcal R\left(\mathbf W, \Theta\right)$ is monotonically non-decreasing after each updating step. Overall, it can be summarized that 
\begin{align}
    \mathcal{R}\left( {{\mathbf W^{\left(t+1\right)}},{\Theta^{\left(t+1\right)} }} \right)\ge \mathcal{R}\left( {{\mathbf W^{\left(t\right)}},{\Theta^{\left(t\right)} }} \right)
    \ge \cdots \ge \mathcal{R}\left( {{\mathbf W^{\left(1\right)}},{\Theta^{\left(1\right)} }} \right).
\end{align}
Besides, $\mathcal{R}\left( {{\mathbf W},{\Theta}} \right)$ is upper-bounded by a finite value due to the limited transmit powers of BSs and the finite number of phase shifts of IRSs. As the number of iterations increases, we finally have $ \mathcal{R}\left( {{\mathbf W^{\operatorname{opt}}},{\Theta^{\operatorname{opt}} }} \right)\triangleq\mathcal{R}\left( {{\mathbf W^{\left( \operatorname{t}_{\max}\right)}},{\Theta^{\left(\operatorname{t}_{\max}\right)} }} \right)$, where $\operatorname{t}_{\max}$ is the maximal number of iterations when Algorithm \ref{a3} converges. Therefore, the strict convergence of Algorithm \ref{a3} can be guaranteed. 

Meanwhile, the complexity of Algorithm \ref{a3} is summarized as follows. The complexities of updating $\mathbf U_k$ and $\mathbf Y_k$ are $\mathcal O \left(M_u^3\right)$, $\forall k$, respectively. The complexity of calculating $\mathbf W_{l,k}$ is $\mathcal O\left(M_b^3\right),\forall {l,k}$, and the complexity of updating $\theta$ is $\mathcal O\left(\mathcal{N}^2\right)$. Therefore, based on the aforementioned discussions, the total complexity of Algorithm \ref{a3} is $\mathcal O\left(\mathcal I_{0}\left(2KM_u^3+\mathcal I_{1}LKM_b^3+\mathcal I_{\operatorname{ASO}}\mathcal{N}^2\right)\right)$, where $\mathcal I_{1}$, $\mathcal I_{\operatorname{ASO}}$ and $\mathcal I_{0}$ are the number of iterations when Algorithm \ref{a1}, Algorithm \ref{a2} and Algorithm \ref{a3} converge, respectively. \textcolor{blue}{For comparison purposes, in Table \ref{com}, we summarize the complexities of SDR, QCR, MM, and ASO algorithms. It is observed that ASO-based algorithm has the lowest computational complexity. Meanwhile, in Section \ref{5.0}, we compare the performance achieved by ASO with regard to the above benchmark schemes.}
\begin{table}
\centering
\caption{Comparison of Complexity}
\label{com}
\begin{tabular}{|c|c|c|c|c|}
\hline 
 {Algorithms}&SDR-based&QCR-based&MM-based&ASO-based\\\hline
 {Computational complexity}&$\mathcal O\left(RN+2+\left({ RN+1}\right)^2\right)^{3.5}$&$\mathcal O\left({R}^6{N}^6\right)$&$\mathcal O\left({R}^3{N}^3\right)+\mathcal O\left(\mathcal I_{\operatorname{MM}}{R}^2{N}^2\right)$&$\mathcal O\left(\mathcal I_{\operatorname{ASO}}{R}^2{N}^2\right)$\\\hline
\end{tabular}
\end{table}

\section{Numerical Simulation and Discussion}{\label{5.0}}
As follows, simulation results are provided to evaluate the performance of the proposed ASO-based joint optimization algorithm. 
We consider a three-IRSs assisted cell-free wireless communication system, where six-BSs are transmitting signals to four-UEs cooperatively. 
As shown in Fig. \ref{f2}, we assume a 3-D scenario, where the height of the BSs, the IRSs, and the UEs are 3m, 6m, and 1.5m, respectively. The four UEs are uniformly and randomly distributed in a circle centered at $(\chi,100)$ with a radius of 10m.
\textcolor{blue}{Meanwhile, we assume the phase shifts along the horizontal and vertical are $N_h=10$ and $N_v=N/N_h$, respectively.} 
Unless otherwise stated, the simulation parameters are summarized in Table \ref{simu}.
\begin{table}
\centering
\caption{Simulation Parameters}
\label{simu}
\begin{tabular}{c|c}
\hline 
 \textbf{Parameters}&\textbf{Values}  \\ \hline
 Number of antennas at each BS&$M_b=4$  \\ \hline
 Number of antennas at each UE&$M_u=2$  \\ \hline
 Number of phase shifts at each IRS&$N=60$  \\ \hline
 Reflecting efficiency of IRSs& $\alpha=1$ \\\hline
 Path loss at the reference distance&$C_0=-30$ dB  \\ \hline
 Maximum transmit power&$P_{\max,l}=0.1$ W, $\forall l$  \\ \hline
 Path loss exponent of direct channel&$p_{lk}=3.75,\forall {l,k}$ \\ \hline
 Path loss exponent of IRS-related channel&$p_{lr}=p_{rk}=2.2,\forall {l,k,r}$ \\ \hline
 Rician factor of IRS-related channel&$\beta_{G}=\beta_{S}=3$ dBw \\ \hline
 Noise power&$\sigma^2=-80$ dBm \\ \hline
\end{tabular}
\end{table}
\begin{figure}
    \centering
 \includegraphics[width=0.5\linewidth]{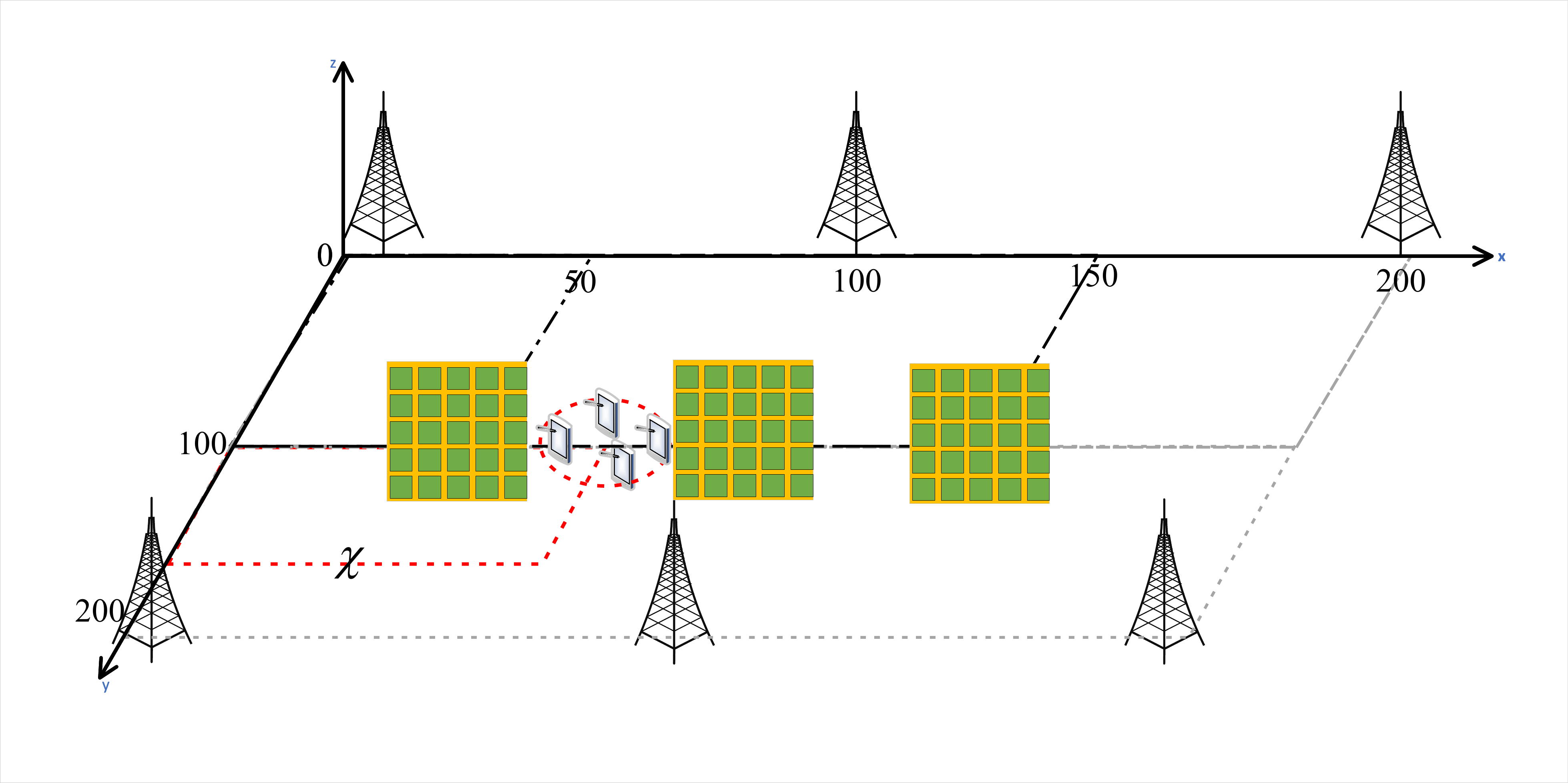}
    \caption{IRSs-assisted cell-free MIMO communication scenario.}
    \label{f2}
\end{figure}

\subsection{Convergence Behavior}
In this subsection, we investigate the convergence of the ASO-based algorithms under the continuous phase shift case and the discrete phase shift case of $\mathcal M=2$ and $\mathcal M=4$, i.e., ``ASO", ``ASO, $\mathcal M=2$", and ``ASO, $\mathcal M=4$".
As shown in Fig. \ref{fconvergence}, we study the sum-rate achieved by the three schemes against the number of iterations under the different number of phase shifts at each IRS, i.e., $N=60$ and $N=120$. 
The curves are consistent with our expectation, where the three schemes converge to stationary points after a few iterations. 
It is observed that the convergence speed of the ``ASO" scheme under $N=120$ is slower than that under $N=60$.
Besides, the curves of the discrete phase shift case also indicate that the convergence speed is sensitive to the size of the discrete phase set. 
\begin{figure}
    \centering
    {\includegraphics[width=0.5\linewidth]{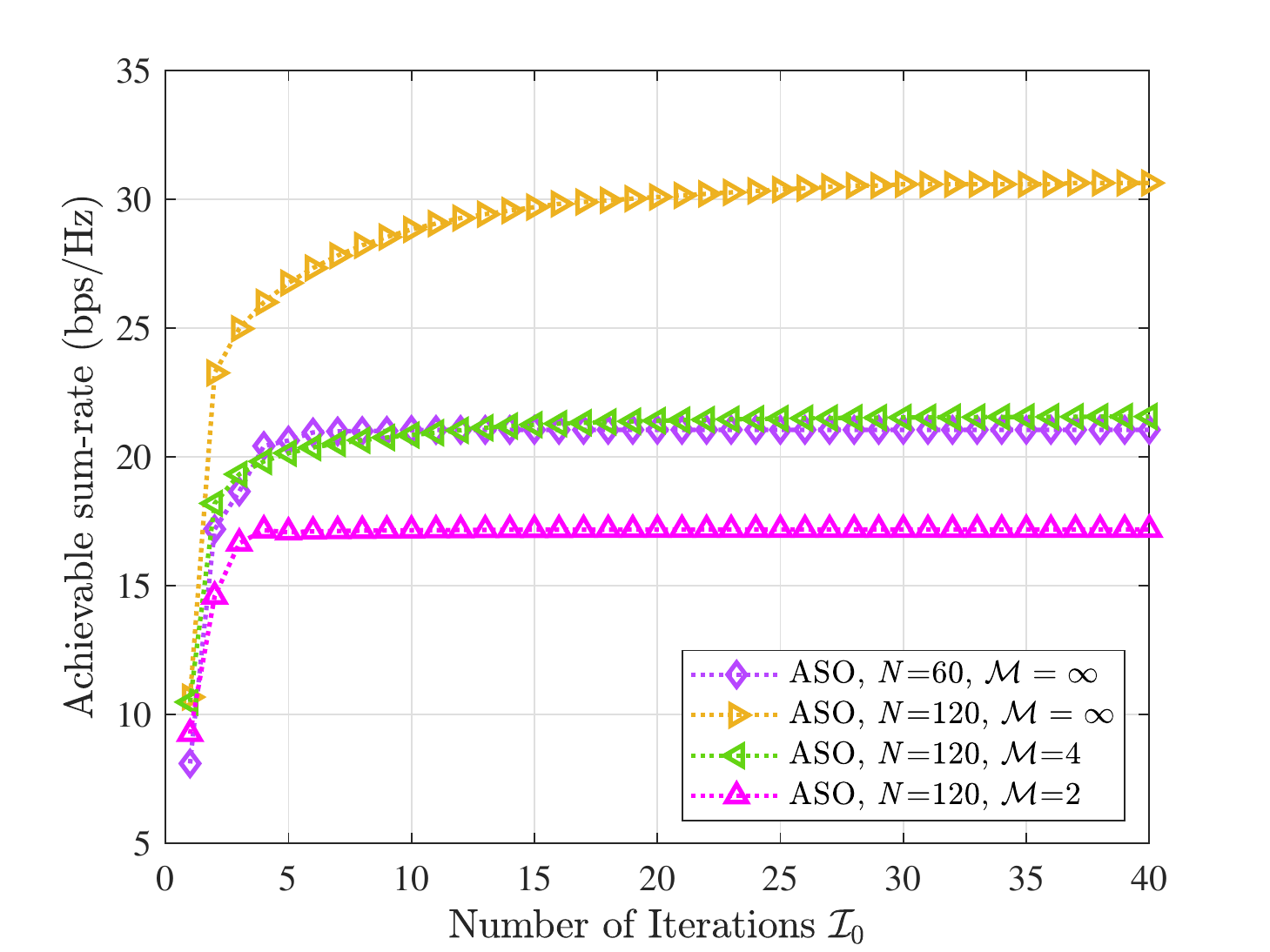}}
    \caption{Achievable sum-rate against the number of iterations.}
 \label{fconvergence} 
\end{figure}

\subsection{Performance Comparison and Discussions}
To show the performance gain achieved by the proposed ASO-based algorithm, we consider the following schemes for comparison
\begin{itemize}
    \item \textbf{SDR/QCR/MM}: \textcolor{blue}{Replacing ASO algorithm in Algorithm \ref{a3} by SDR, QCR and MM techniques, respectively,  for optimizing the continuous phase shift;}
    \item \textbf{random phase}: Only optimizing the active transmit beamforming matrices of BSs with random passive reflecting beamforming matrices of IRSs, where each phase shift is following uniform distribution over $\left [0, 2\pi \right )$;
    \item \textbf{without IRS}: The conventional cell-free scenario without deploying any IRSs.
\end{itemize}
Meanwhile, all the simulation results are obtained by averaging 200 channel realizations. In the following, we investigate the impacts of the critical simulation parameters on the performance gains achieved by the schemes as mentioned above.

\subsubsection{Impact of the CSI estimation error ratio}\label{imcsi}The CSI estimation is challenging to realize in IRS-assisted system, therefore, here we investigate the impact of imperfect CSI on the achievable sum-rate. \textcolor{blue}{We assume a bounded CSI error model \cite{9180053}, i.e.,
\begin{align}
    \mathbf D_{l,k}= \mathbf  {\hat D}_{l,k} + \Delta\mathbf D_{l,k}, \forall {l,k};\quad\mathbf G_{r,k}= \mathbf  {\hat G}_{r,k} + \Delta\mathbf G_{r,k}, \forall {r,k};\quad
    \mathbf S_{l,r}= \mathbf  {\hat S}_{l,r} + \Delta\mathbf S_{l,r}, \forall {l,r},\notag
\end{align}
where $\mathbf  {\hat D}_{l,k}$ ($\mathbf  {\hat G}_{r,k}$ and $\mathbf  {\hat S}_{l,r}$) are estimated CSI and $\Delta\mathbf D_{l,k}$ ($\Delta\mathbf G_{r,k}$ and $\Delta\mathbf S_{l,r}$) are the estimation errors. The norm of the channel uncertainty can be bounded as \cite{6169188}
\begin{align}
    \left\|\Delta\mathbf D_{l,k}\right\|_{\operatorname{F}}\le \rho \left\|\mathbf {\hat D}_{l,k}\right\|_{\operatorname{F}}, \forall {l,k};\quad
    \left\|\Delta\mathbf G_{r,k}\right\|_{\operatorname{F}}\le \rho \left\|\mathbf {\hat G}_{r,k}\right\|_{\operatorname{F}}, \forall {r,k};\quad
    \left\|\Delta\mathbf S_{l,r}\right\|_{\operatorname{F}}\le \rho \left\|\mathbf {\hat S}_{l,r}\right\|_{\operatorname{F}}, \forall {l,r}.\notag
\end{align}}
In Fig. \ref{fCSIerror}, we investigate the sum-rate achieved by all schemes against the error ratio. It can be observed that with increasing the estimation error $\rho$, the performance gaps compare with perfect CSI without error (i.e., $\rho=0$) become larger. Meanwhile, the ``ASO" scheme is robust to the CSI estimation error ratio due to the performance suffers a fewer performance losses compared with the perfect CSI scenario. It is observed that when $\rho=0.1$, the proposed ASO-based algorithm achieves a similar performance to the ``MM" scheme. 

It is worth pointing out that the proposed ASO-based algorithm is not the robust IRSs-assisted MIMO cell-free transmission design scheme with the CSI uncertainty, which needs to be investigated and will be left as future work. 

\textcolor{blue}{To evaluate the robustness of the proposed ASO-based algorithm, we add the benchmark scheme ``ASO, $\rho=0.1$", to denote the scheme which exploits the proposed ``ASO" algorithm to solve the problem with uncertainty CSI ($\rho=0.1$).}
\begin{figure}
    \centering
    {\includegraphics[width=0.5\linewidth]{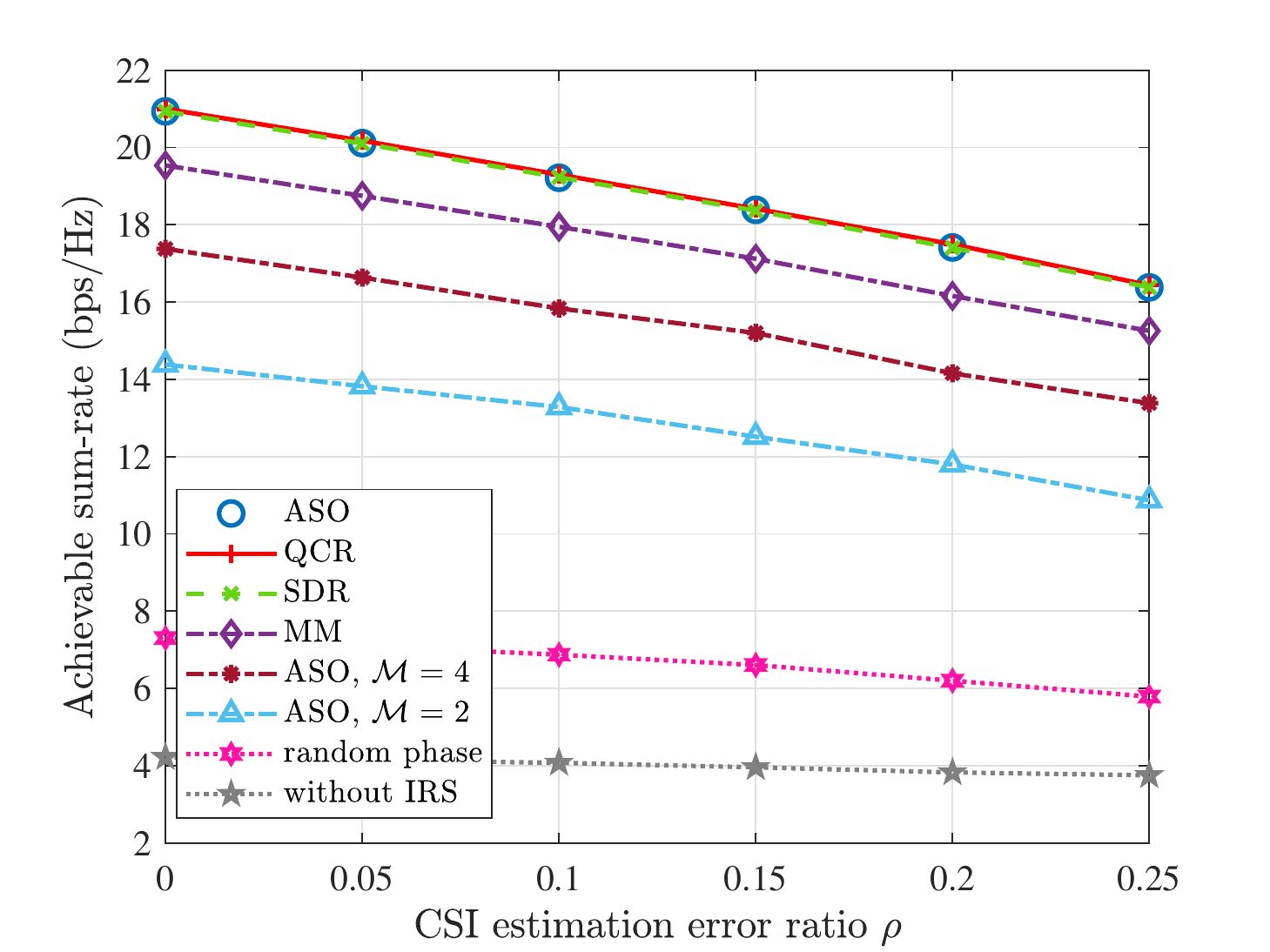}}
    \caption{Achievable sum-rate against the CSI estimation error ratio.}
 \label{fCSIerror} 
\end{figure}

\subsubsection{Impact of the number of phase shifts of each IRS}
We present the sum-rates achieved by all schemes against the number of the phase shifts of each IRS in Fig. \ref{fnirs}.
It is observed that IRSs can considerably improve the performance compared with the conventional cell-free system (i.e., ``without IRS" scheme), wherewith the increasing number of phase shifts, the performance achieved by the IRS-related schemes (i.e., ``ASO", ``ASO, $\rho=0.1$", ``ASO, $\mathcal M=2/4$", ``SDR", ``QCR", ``MM" and ``random phase") increase. 
Meanwhile, the ``random phase" scheme achieves a limited performance gain than the ``without IRS" scheme, demonstrating that the passive reflecting beamforming needs to be carefully optimized to improve the performance significantly.
Besides, the size of the discrete phase set impacts the performance, and with increasing $N$, the performance gaps between ``ASO, $\mathcal M=4$" and ``ASO, $\mathcal M=2$" become larger. The performance loss compared with the continuous phase shift case can be compensated by adopting high-resolution discrete phase shifts.
Most important, the proposed ``ASO" scheme achieves a better or nearly the same performance with regard to ``MM", ``SDR", and ``QCR" schemes, but with the lowest computational complexity. 
\textcolor{blue}{Furthermore, the curve of the ``ASO, $\rho=0.1$" scheme, which achieves a similar performance to that of the ``MM" scheme, which shows our proposed ``ASO" algorithm has the strong robustness to the CSI estimation errors.}

\begin{figure}
    \centering
    {\includegraphics[width=0.5\linewidth]{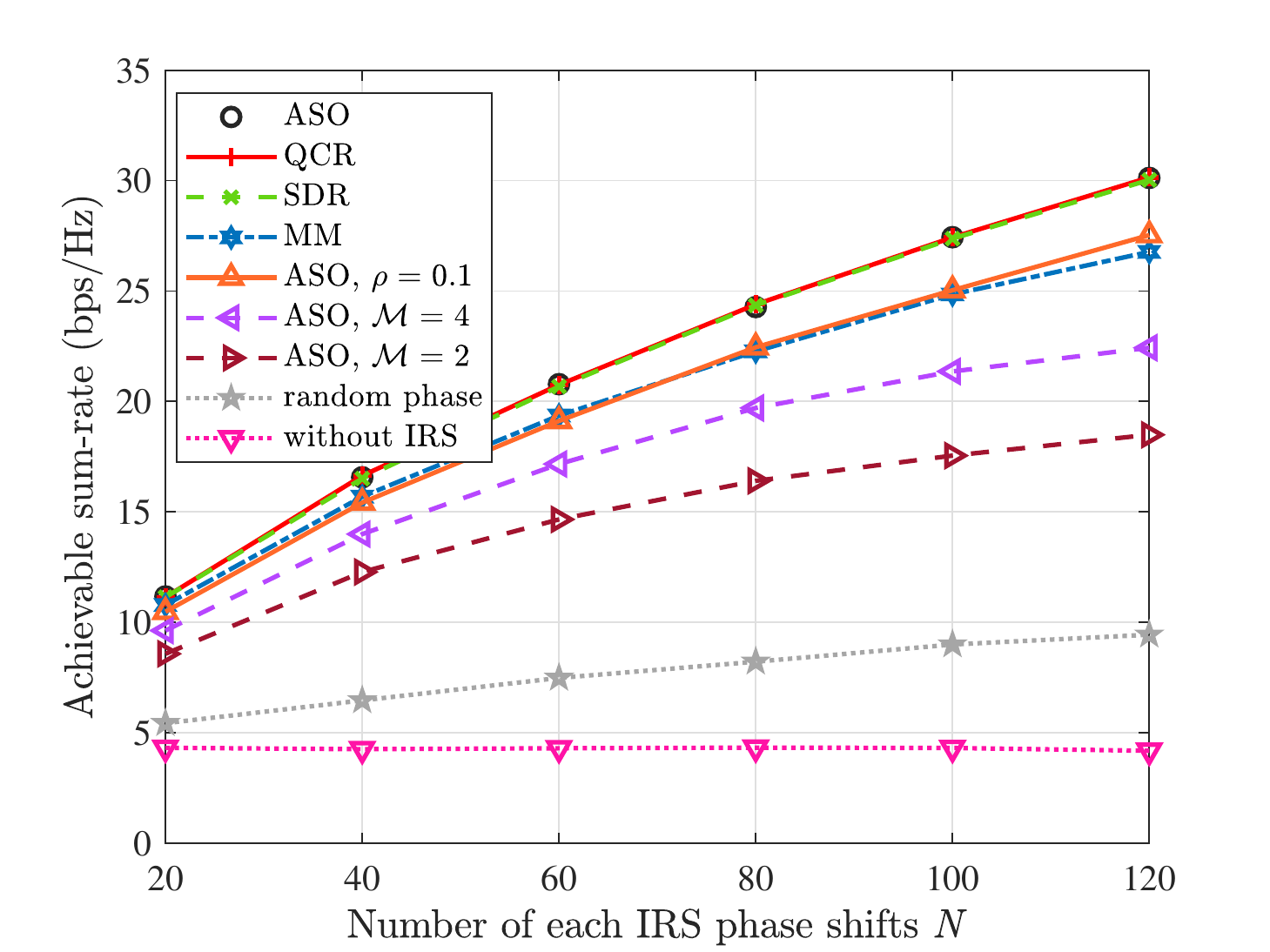}}
    \caption{Achievable sum-rate against the number of phase shifts at each IRS.}
 \label{fnirs} 
\end{figure}

\subsubsection{Impact of the UEs location}
We investigate the impact of the UEs location in Fig. \ref{firsloc}, and the curves are approximately symmetric with respect to the line of $\chi=100$m, which are consistent with the expectations. 
It is observed that as the UEs are deployed closer to each IRS, i.e., $\chi=50$m, $\chi=100$m, and $\chi=150$m, the IRS-related schemes achieve the best performances, which is due to the smaller reflection channel fading. This indicates that the system performance can indeed be improved significantly with the deployment of IRSs, especially when the UEs are closed to the IRS and when the passive reflecting beamforming of IRSs are carefully optimized.

\begin{figure}
    \centering
    {\includegraphics[width=0.5\linewidth]{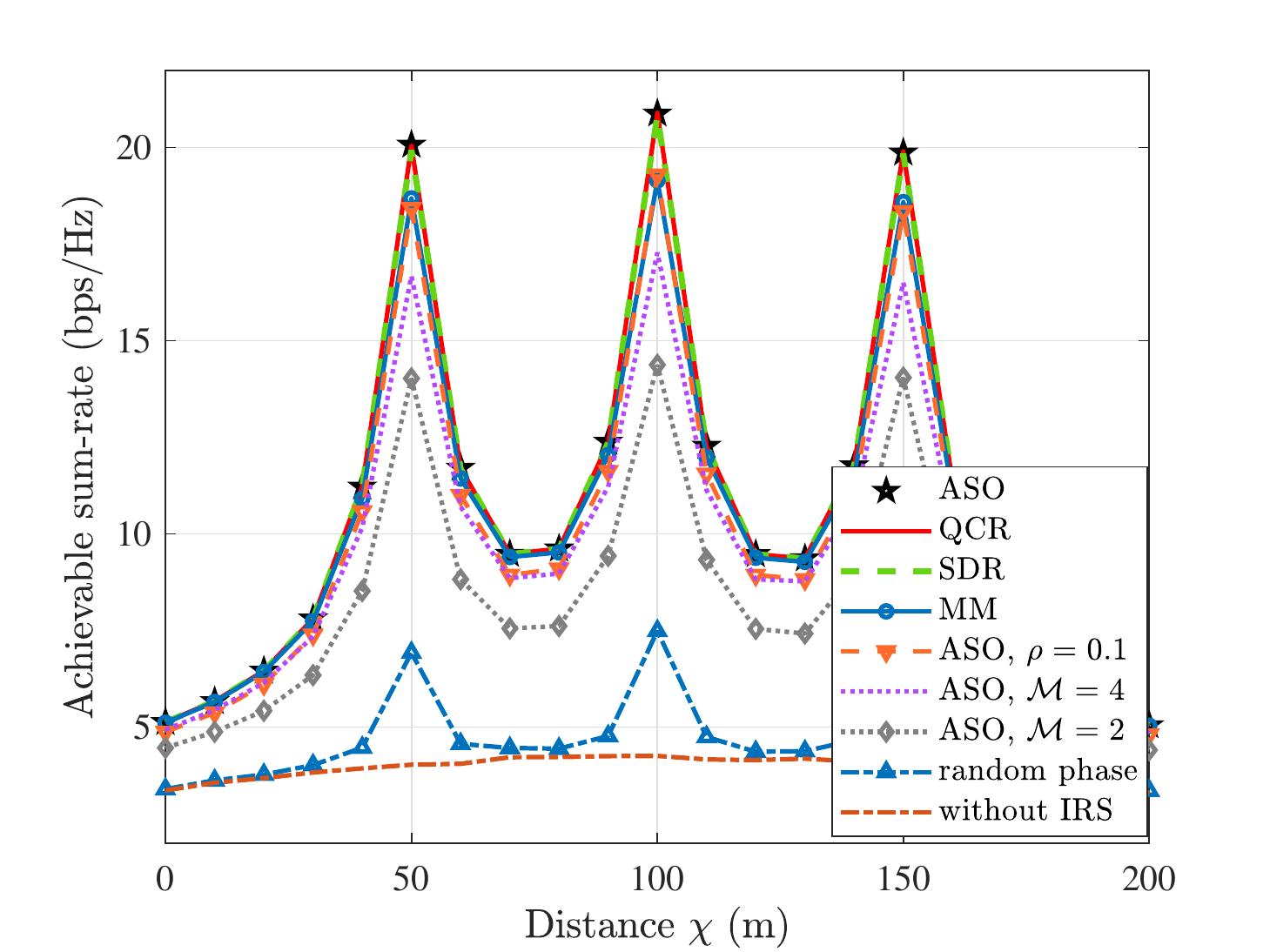}}
    \caption{Achievable sum-rate against the location of the IRS.}
    \label{firsloc}
\end{figure}

\subsubsection{Impact of the path loss exponent}
We investigate the impact of the path loss exponent of the IRS-related channels while fixing that of direct channels. 
As shown in Fig. \ref{fpathloss}, the sum-rate achieved by all IRS-related schemes decrease significantly with the increasing of the path loss exponent, and finally (i.e., $\rho \ge 3.4$), the curves are approximately coinciding with the ``without IRS" scheme. 
This is mainly due to that when the path loss exponents of IRS-related channels are large, the array gains introduced by IRS are negligible. 
\textcolor{blue}{To this end, the location of the IRS should be appropriately chosen for ensuring a free space IRS-related channels can be established.}
\begin{figure}
    \centering
    {\includegraphics[width=0.5\linewidth]{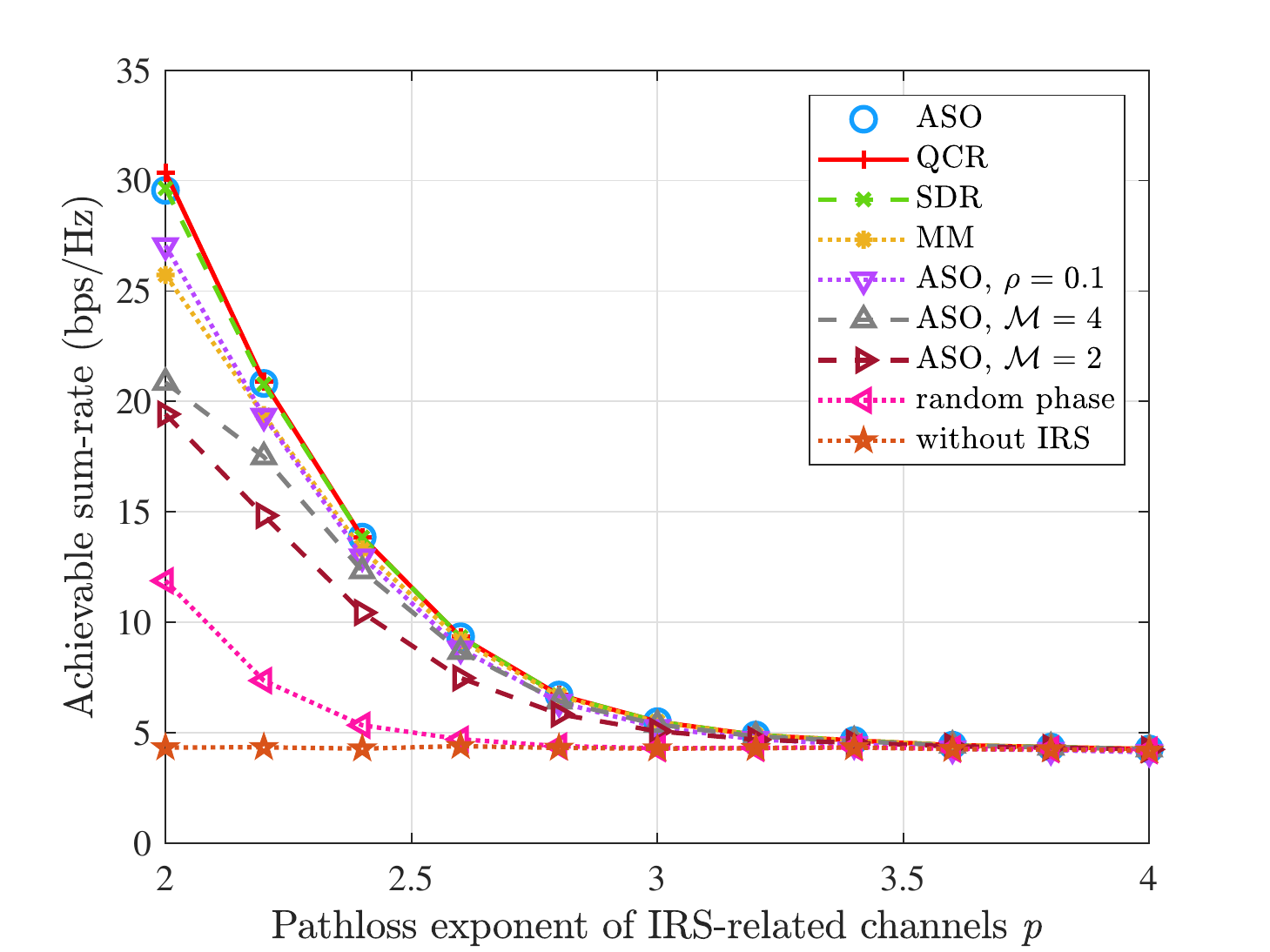}}
    \caption{Achievable sum-rate against the pathloss exponent of the IRS-related channels.}
 \label{fpathloss} 
\end{figure}

\subsubsection{Impact of the reflecting efficiency}
\textcolor{blue}{We present the sum-rates against the reflecting efficiency of IRSs in Fig. \ref{efficiency}. 
It can be observed that the reflecting efficiency of IRS has a substantial impact on the performance, where as expected, with the increasing $\alpha$, the sum-rate achieved by IRS-related schemes increased significantly. 
It can be attributed to that a larger $\alpha$ means the fewer power loss caused by signal absorption at IRSs.}
Meanwhile, some work assumed a more perfect IRS model \cite{9112252,9154244,9459505,8982186}, i.e., the amplitude and phase of each element can be controlled independently, which is expressed as $ \Theta_r=\operatorname{diag}\left(\vartheta _{r,1}e^{j{\phi_{r,1}}},\vartheta_{r,2}e^{j{\phi_{r,2}}},\cdots,\vartheta_{r,N}e^{j{\phi_{r,N}}}\right),\forall r \in R$.
However, the perfect IRS model does not only leads to a better performance gain but also causes a higher implementation cost (computational complexity) with respect to the IRS model assumed in this paper. The trading-off between the performance and the implementation cost (computational complexity) is an interesting problem, which is left for future work.
\begin{figure}
    \centering
    {\includegraphics[width=0.5\linewidth]{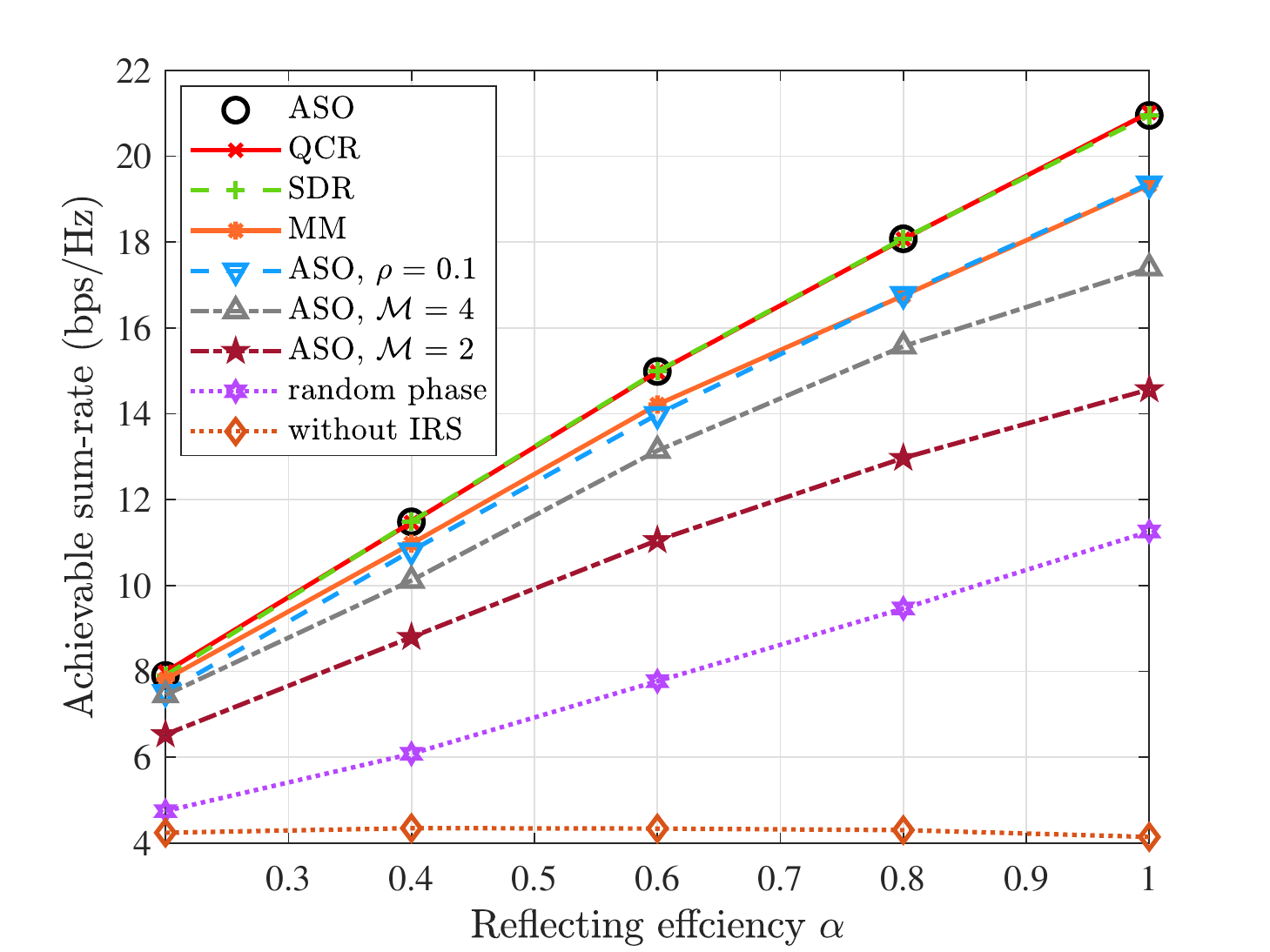}}
    \caption{Achievable sum-rate against the reflecting efficiency.}
 \label{efficiency} 
\end{figure}

\section{Conclusion}{\label{5.0}}
In this paper, we investigated the achievable sum-rate maximization problem in the multiple distributed IRSs assisted cell-free MIMO cooperative transmission system. 
 We proposed an efficient framework to jointly design the BSs, the IRSs, and the UEs. 
 Particularly, we first transformed the non-convex problem to an equivalent form, and then decomposed the reformulated problem into two subproblems, and solved the two subproblems alternating iterative. The proposed algorithms were guaranteed to converge to locally optimal solutions. 
 We also extended the IRSs to the discrete phase shift case and provided an exhaustive search method to solve it. 
Simulation results demonstrated the proposed algorithms achieved considerable performance improvements than the benchmark schemes.
Meanwhile, the robust beamforming design with imperfect CSI and the trading-off between performance with the implementing cost and the computational complexity were interesting problems, which were left as future works.

\appendices
\section{Proof of Proposition \ref{p1}}{\label{ap1}}
By introducing the auxiliary matrix $\mathbf U_k \in \mathbb{C}^{M_u \times M_u},\forall k$, we have an equivalent problem
\begin{align}
\mathop {\max}_{\mathbf W, \Theta, \mathbf U} \quad &\hat f \left(\mathbf W, \Theta, \mathbf U\right)=\sum_{k=1}^K{\log \left|\mathbf I +\mathbf U_k\right|}\label{eq75}\\
\operatorname{s.t.}\quad & \Gamma _k=\mathbf U_k,\forall k \in K,\tag{71a}\label{eq76}\\
&\eqref{eq8},\eqref{eq9}\notag.
\end{align} 
Note that Problem \ref{eq75} is a convex with respect to $\mathbf U$, so the strong duality holds.
The corresponding Lagrangian function is formulated as 
\begin{align}\label{eq77}
    \hat{\mathcal L}\left(\mathbf W, \Theta, \mathbf U\right)=\sum_{k=1}^K{\log \left|\mathbf I +\mathbf U_k\right|}+\sum_{k=1}^K{\operatorname{Tr}\left(\Upsilon_k\left(\Gamma_k-\mathbf U_k\right)\right)},
\end{align}
where ${\Upsilon_k}\in \mathbb{C}^{M_u \times M_u},\forall k$ are the Lagrangian multipliers associated to the constraints in \eqref{eq76}.
By setting the partial derivative of $\hat{\mathcal L}\left(\mathbf W, \Theta, \mathbf U\right)$ with respect to $\mathbf U _{k},\forall k$  to zeros, we have ${\left( {\mathbf {I} + {\mathbf U_{k}}} \right)^{ - 1}} = \Upsilon_k,\forall k \in K$,
which yields ${\mathbf U_{k}} = {\mathbf I}/{\Upsilon _{k}} - \mathbf {I}$, and substitute $\mathbf U_{k}$ into \eqref{eq77}, we have $\Upsilon _{k}= {\mathbf V_{k}}/{\mathbf {\bar V}_{k}},\forall k \in K$.
Substituting the so-obtained optimal Lagrange multipliers into the expression for the Lagrangian in \eqref{eq77}, and after some modest modification, we obtain $f_1\left(\mathbf W,\Theta, \mathbf U\right)$.
The proof of Proposition \ref{p1} is completed.

\section{Proof of Lemma \ref{lemmaconver}}\label{lemma2proof}
\textcolor{blue}{Let $\rho _i^{\operatorname{q+1}}$ denote the value of the objective function of $\rho ^{\operatorname{q+1}}$ after updating the $i$-th element and fixing the others $\mathcal N -1$ elements of $\theta$ in $q$-th sub-iteration, we have
\begin{align}\label{eqlem}
    \rho^{\operatorname{u}}\le\rho _1^{\operatorname{u+1}}\le\rho _2^{\operatorname{u+1}}\cdots\le\rho _{\mathcal{N}}^{\operatorname{u+1}}=\rho ^{\operatorname{u+1}},
\end{align}
which shows that the value of $f_7\left(\theta\right)$ achieved by Algorithm \ref{a2} increases monotonically. Meanwhile, we have $\theta^{\operatorname{H}}{\mathcal { Z}}\theta \le \alpha^2\mathcal{N}\lambda_{\mathcal { Z}}^{\max}$ and $\operatorname{Re}\left\{\theta^{\operatorname{H}}\omega\right\} \le \alpha\sum_{n=1}^{\mathcal N}\left|\omega_n\right|$,
where $\lambda_{\mathcal { Z}}^{\max}$ is the maximum eigenvalue of $\mathcal { Z}$. The above inequalities yield that the optimal objective value of $f_7\left(\theta\right)$ is upper-bounded by a finite value. Therefore, Algorithm \ref{a2} is guaranteed  to converge.
This completes the proof of Lemma \ref{lemmaconver}.}


\ifCLASSOPTIONcaptionsoff
  \newpage
\fi

\bibliography{ref}
\end{document}